\newcommand{\RNum}[1]{\lowercase\expandafter{\romannumeral #1\relax}}
\newtheorem{thm}{Theorem}[section]
\newtheorem{lem}[thm]{Lemma}
\newtheorem{cor}[thm]{Corollary}
\newtheorem{exmp}[thm]{Example}
\newtheorem{rem}[thm]{Remark}
\newtheorem{rmk}[thm]{Remark}
\newtheorem{thm-con}[thm]{Theorem-Conjecture}
\theoremstyle{definition}
\newtheorem{defn}[thm]{Definition}
\def\rank{{\rm rank}}
\newcommand{\F}{\mathbb F}
\def\Tr{{\rm Tr}}
\begin{document}
\title[Second-order zero differential spectra of certain maps]{The second-order zero differential spectra of some APN and other maps over finite fields}
 
 \author[K. Garg]{Kirpa Garg}
 \address{Department of Mathematics, Indian Institute of Technology Jammu, Jammu 181221, India}
 \email{kirpa.garg@gmail.com}
 
 \author[S. U. Hasan]{Sartaj Ul Hasan}
 \address{Department of Mathematics, Indian Institute of Technology Jammu, Jammu 181221, India}
 \email{sartaj.hasan@iitjammu.ac.in}
 
 \author[C. Riera]{Constanza Riera}
 \address{Department of Computer Science, Electrical Engineering and Mathematical Sciences, Western
 Norway University of Applied Sciences, 5020 Bergen, Norway}
 \email{csr@hvl.no}
 
  \author[P.~St\u anic\u a]{Pantelimon~St\u anic\u a}
  \address{Applied Mathematics Department, Naval Postgraduate School, Monterey, CA 93943, USA}
 \email{pstanica@nps.edu}

\keywords{Finite fields, almost perfect non-linear functions, second-order zero differential spectra, second-order zero differential uniformity}

\subjclass[2020]{12E20, 11T06, 94A60}

\begin{abstract}
The Feistel Boomerang Connectivity Table and the related notion of $F$-Boomerang uniformity (also known as the second-order zero differential uniformity) has been recently introduced by Boukerrou et al.~\cite{Bouk}. These tools shall provide a major impetus in the analysis of the security of the Feistel network-based ciphers. In the same paper, a characterization of almost perfect nonlinear functions (APN) over fields of even characteristic in terms of second-order zero differential uniformity was also given. Here, we find a sufficient condition for an odd or even function over fields of odd characteristic to be an APN function, in terms of second-order zero differential uniformity. Moreover, we compute the second-order zero differential spectra of several APN or other low differential uniform functions, and show that our considered functions also have  low second-order zero differential uniformity, though it may  vary widely, unlike the case for even characteristic when it is always zero.
\end{abstract}

\maketitle

\section{Introduction}

In symmetric-key cryptography, the substitution box (S-box) plays a crucial role in most of modern block ciphers. There are many cryptographic attacks that are possible on these block ciphers. One of the most powerful attacks introduced by Biham and Shamir~\cite{Biham91} in 1991 is known as differential cryptanalysis. Further, to quantify the degree of resistance of these functions against differential attacks, Nyberg~\cite{Nyberg} introduced the notion of differential uniformity. The smaller the differential uniformity of a function, the stronger its resistance to the differential attack. 

Let $\F_{q}$ denotes the finite field with $q=p^n$ elements where $n$ is a positive integer and $p$ is a prime number. We use the notation $\F_{q}^{*}$ to represent the multiplicative cyclic group of non-zero elements of $\F_{q}$. Let $F$ be a function from $\F_{q}$ to $\F_{q}$, and $a$ be any element in $\F_{q}$, then we define the derivative of $F$ in the direction $a$ as $D_F(X,a) := F(X+a)-F(X)$ for all $X\in \F_{q}$. The Difference Distribution Table (DDT) entry $\Delta_F(a,b)$ for any $a, b \in \F_q$, at point $(a, b)$ is given by the number of solutions $X\in \F_{q}$ satisfying $D_F(X,a) = b$.
The differential uniformity of $F$, denoted by $\Delta_F$, is given by $\Delta_{F} := \max\{\Delta_{F}(a, b) : a\in \F_{q}^*, b \in \F_{q} \}.$
We say that the function $F$ is a perfect nonlinear (PN) function if $\Delta_{F} = 1$ and an almost perfect nonlinear (APN) function if $\Delta_{F} = 2$.

The boomerang attack is another significant cryptanalysis technique introduced by Wagner~\cite{DW} for analyzing block ciphers. This attack can be seen as a variant of the classical differential attack. At Eurocrypt 2018, Cid et al.~\cite{cid} analyze the boomerang style attack in a better way by introducing a new tool known as the Boomerang Connectivity Table
(BCT). To quantify the resistance of a function against the boomerang attack,
Boura and Canteaut~\cite{BoCa} introduced the concept of boomerang uniformity. However, this was valid only for Substitution-Permutation Network (SPN) ciphers and the obvious question was to define its equivalent for Feistel ciphers. It is important to note that Feistel network-based ciphers, such as 3-DES and CLEFIA~\cite{SSA}, are equally significant in block cipher design as SPN. Boukerrou, Huynh, Lallemand, Mandal, and Minier~\cite{Bouk} addressed the requirement for a counterpart to the BCT, and extended this concept to Feistel ciphers. They introduced the Feistel Boomerang Connectivity Table (FBCT) as an extension for Feistel ciphers, where the S-boxes may not be permutations. Further, these coefficients of the FBCT are related to the second-order zero differential spectra of the functions over finite fields. 

In~\cite{Bouk}, the authors studied the FBCT entries for some APN and inverse functions over finite fields of even characteristics. Li et al.~\cite{LYT} extended this research to odd characteristics, and studied the second-order zero differential spectra of some APN power functions over odd characteristic. Furthermore, Eddahmani and Mesnager~\cite{EM} investigated the entries of the  FBCT for the inverse, Gold, and Bracken-Leander functions over finite fields with even characteristic. Recently, Man et al.~\cite{MMN} also computed the FBCT entries for a specific power function, $F(X)=X^{2^{m+1}-1}$ over $\F_{2^{n}}$, where $n=2m$ or $n=2m+1$. Further, Garg et al.~\cite{GHRS} also considered several APN or other low differential uniform functions in both even and odd characteristics and investigated their second-order zero differential spectra.

It was shown in~\cite{Bouk} that for the finite field of even characteristic, all the non-trivial coefficients of FBCT of $F$ are 0 if and only if $F$ is APN, or equivalently $F$ is second-order zero differentially $0$-uniform if and only if it is APN.  In this paper, we show that, for odd characteristic, if an odd or even function $F$ is second-order zero differentially $1$-uniform then it has to be  an APN function. The converse, however, does not hold. We further  compute the second-order zero differential spectra of some other functions that are APN, or have other low differential uniformity over finite fields of both even and odd characteristic.   The paper is structured as follows. In Section~\ref{S2}, we recall some definitions and results required later. In Section~\ref{S3}, we give the only known constraint with respect to the second-order zero differential uniformity, for functions that are not PN in odd characteristic, namely, we show that  an odd or even function $F$, which is second-order zero differentially $1$-uniform must be APN. We also consider the second-order zero differential spectra of some functions (including APNs) over finite fields of odd characteristic. Furthermore, Section~\ref{S4} explores the second-order zero differential spectrum of some (partial) $0$-APN functions and extensions. We conclude the paper in Section~\ref{comments}.

\section{Preliminaries}\label{S2}

In this section, we recall some definitions and results that we will use in subsequent sections.

\begin{defn} For $p$ an odd prime, $n$ a positive integer, and $q=p^n$,  we let $\chi$ be the 
quadratic character of $\mathbb F_{q}$   defined by
$$
 \chi (X):= \begin{cases}  
0 &\quad \text{if $X=0$},\\
1 &\quad \text{if $X$ is a square of an element of $\mathbb F_{q}^*$}, \\ 
 -1 &\quad \text{otherwise.}
\end{cases}
$$ 
\end{defn}

\begin{defn}\cite{Bksrs}
Let $F$ be a function from $\F_{2^n}$ to itself. For a fixed $x_0 \in \F_{2^n}$, we call $F(x)$ as $x_0$-APN(or partially APN) if all the points $x, y$ satisfying
$$F(x_0)+F(x)+F(y)+F(x_0 +x+y) = 0 $$
belong to the curve $(x_0 +x)(x_0 +y)(x+y) = 0.$
\end{defn}

\begin{defn}~\textup{\cite{Bouk,LYT}} For $F : \F_{q} \rightarrow \F_{q}$, $q=p^n$, $n$ a positive integer, $p$ an arbitrary prime,   and $a, b \in \F_{q}$, the {\em second-order zero differential spectra} of $F$ with respect to $a,b$ is defined as
\begin{equation}
\nabla_F(a, b) := \# \{X \in \F_{q}: F(X + a + b) -F (X + b) -F(X + a) + F (X) = 0\}.
\end{equation}
\end{defn}
If $q = 2^n$, we call $\nabla_F = \text{max}\{\nabla_F(a,b):a\neq b, a, b \in \F_{2^n}^* \}$ the {\em second-order zero differential uniformity} of $F$. If $q=p^n,\,p > 2$, we call $\nabla_F =\text{max}\{\nabla_F(a,b) : a, b \in \F_{p^n}^* \}$ the {\em second-order zero differential uniformity} of $F$.

\begin{defn}{(Feistel Boomerang Connectivity Table)}~\textup{\cite{Bouk}}
 Let $F:\F_{2^n}\to \F_{2^n}$ and $a,b \in \F_{2^n}$. The {\em Feistel Boomerang Connectivity Table (FBCT)} of $F$ is given by a $2^n \times 2^n$ table, in which the entry for the $(a, b)$ position is given by
 \begin{equation*}
FBCT_F(a, b) =\# \{X \in \F_{2^n}: F(X + a + b) +F (X + b) +F(X + a) + F (X) = 0\}.
\end{equation*}
\end{defn}

\begin{defn}{($F$-Boomerang Uniformity)}~\textup{\cite{Bouk, LYT}}
The $F$-Boomerang uniformity corresponds to the highest value in the FBCT without considering the first row, the first column and the diagonal, that is,
$$
\beta_F = \max_{a \neq 0, b \neq 0, a \neq b} FBCT_F(a,b).
$$
\end{defn}
Note that the coefficients of FBCT are exactly the second-order zero differential spectra of functions over $\F_{2^n}$, and that the $F$-Boomerang uniformity is in fact the second-order zero differential uniformity of $F$ in even characteristic.

The following  lemma gives a characterization of APN functions in terms of their second-order zero differential uniformity.
\begin{lem}\label{L00}\cite[Lemma 2.6]{LYT}
F is an APN function of $\F_{2^n}$ if and only if
\begin{equation*}
\nabla_F(a,b)=
\begin{cases}
0 &~\mbox{~if~} ab \neq0, a\neq b,\\
2^n &~\mbox{~otherwise}.
\end{cases}
\end{equation*}
where $a,b \in \F_{2^n}$. Moreover, $F$ is second-order zero differential $0$-uniform.
\end{lem}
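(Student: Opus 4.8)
The plan is to reinterpret the defining equation of $\nabla_F(a,b)$ as a statement about the first-order derivative $D_F(X,a) = F(X+a)-F(X)$, and then exploit the standard characterization of APN functions through the $2$-to-$1$ behaviour of derivatives. Working in characteristic $2$, where $-1=1$, the defining expression becomes $F(X+a+b)+F(X+a)+F(X+b)+F(X)$, which equals $D_F(X+b,a)-D_F(X,a)$ (the signs being immaterial here). Hence an element $X$ is counted by $\nabla_F(a,b)$ precisely when $D_F(X,a)=D_F(X+b,a)$.

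First I would dispose of the degenerate cases collected under ``otherwise''. If $a=0$, or $b=0$, or $a=b$, then substituting directly into $F(X+a+b)+F(X+a)+F(X+b)+F(X)$ makes the four terms cancel in pairs for every $X$ (using $2a=0$ when $a=b$), so the equation holds identically and $\nabla_F(a,b)=2^n$. This requires no hypothesis on $F$.

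The heart of the argument is the regime $ab\neq 0$, $a\neq b$, and the key observation is that in characteristic $2$ the map $X\mapsto D_F(X,a)$ is invariant under the translation $X\mapsto X+a$, since $D_F(X+a,a)=F(X)+F(X+a)=D_F(X,a)$. Thus every fibre of $D_F(\cdot,a)$ is a union of the two-element cosets $\{X,X+a\}$. Now suppose $F$ is APN, so that for each nonzero $a$ the nonempty fibres of $D_F(\cdot,a)$ have size at most $2$; combined with the coset structure, each nonempty fibre is exactly one coset $\{X,X+a\}$. Consequently $D_F(X,a)=D_F(X+b,a)$ forces $X+b\in\{X,X+a\}$, i.e. $b\in\{0,a\}$, which is excluded in this regime. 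Therefore no $X$ is counted and $\nabla_F(a,b)=0$. For the converse, assume $F$ is not APN; then there exist a nonzero $a$ and a value whose fibre under $D_F(\cdot,a)$ has size at least $4$, hence meets at least two distinct $a$-cosets. Choosing $X_1,X_2$ in the fibre from different cosets and setting $b:=X_1+X_2\notin\{0,a\}$ gives $D_F(X_1,a)=D_F(X_1+b,a)$, so $\nabla_F(a,b)\geq 1>0$. This establishes the stated dichotomy.

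Finally, the ``moreover'' clause is immediate: the second-order zero differential uniformity is defined as the maximum of $\nabla_F(a,b)$ over $a,b\in\F_{2^n}^*$ with $a\neq b$, which is precisely the regime $ab\neq 0$, $a\neq b$; since all such entries vanish when $F$ is APN, we obtain $\nabla_F=0$. The step I expect to require the most care is the identification of each nonempty fibre of $D_F(\cdot,a)$ with a single coset $\{X,X+a\}$, as this is exactly where the APN hypothesis (fibres of size $\leq 2$) is combined with the characteristic-$2$ translation invariance (forced collisions of size $2$); the remaining verifications are routine bookkeeping.
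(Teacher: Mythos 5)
Your proof is correct. Note that the paper itself offers no proof of this lemma --- it is stated verbatim with a citation to \cite[Lemma 2.6]{LYT} --- so there is no in-paper argument to compare against; your route, rewriting the second-order zero condition as a derivative collision $D_F(X,a)=D_F(X+b,a)$ and exploiting the characteristic-$2$ fact that each fibre of $D_F(\cdot,a)$ is a union of cosets $\{X,X+a\}$ (so APN forces every nonempty fibre to be a single such coset, while failure of APN yields two elements of one fibre in distinct cosets, producing a nonzero entry at $b=X_1+X_2\notin\{0,a\}$), is the standard argument and essentially the one in the cited source. One cosmetic remark: ``not APN'' directly gives a fibre of size at least $3$, not $4$ as you assert; the upgrade to $\geq 4$ does follow from the coset structure, but your argument only needs the fibre to meet two distinct $a$-cosets, which size $3$ already guarantees.
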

The characterization of PN functions is given by the following lemma.
\begin{lem}\label{L0}\cite[Lemma 2.5]{LYT}
F is a PN function of $\F_{p^n}$ if and only if
\begin{equation*}\nabla_F(a,b)=
\begin{cases}
0 &~\mbox{~if~} ab \neq0,\\
p^n &~\mbox{~otherwise}.
\end{cases}
\end{equation*}
where $a,b \in \F_{p^n}$. Moreover, $F$ is second-order zero differential $0$-uniform.
\end{lem}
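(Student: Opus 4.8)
The plan is to reduce the second-order zero differential condition to a statement about first-order derivatives, since the PN property is by definition a property of first-order derivatives. First I would record the elementary identity
\begin{equation*}
F(X+a+b)-F(X+b)-F(X+a)+F(X) = D_F(X+b,a) - D_F(X,a),
\end{equation*}
which rewrites the second-order difference as the $b$-shift difference of the map $g_a(X) := D_F(X,a) = F(X+a)-F(X)$. Thus $\nabla_F(a,b)$ counts precisely the $X \in \F_{p^n}$ for which $g_a(X+b) = g_a(X)$, i.e. the number of collisions of $g_a$ in the fixed direction $b$.

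With this in hand, the degenerate branch $ab = 0$ is immediate and needs no hypothesis on $F$: if $a=0$ then $g_a \equiv 0$, and if $b=0$ the shift is trivial, so in either case the left-hand side vanishes identically, every $X$ is a solution, and $\nabla_F(a,b) = p^n$. This accounts for the ``otherwise'' case of the spectrum. It also explains the final assertion that a PN function is second-order zero differential $0$-uniform, since the uniformity is taken over $a,b \in \F_{p^n}^*$, that is, exactly over the range $ab \neq 0$.

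It remains to treat $ab \neq 0$ and establish the equivalence. For the forward implication, recall that $F$ is PN exactly when, for every $a \neq 0$, the map $g_a$ is a bijection of $\F_{p^n}$ (each value is attained exactly once). If $F$ is PN and $a,b \neq 0$, then $g_a(X+b) = g_a(X)$ forces $X+b = X$ by injectivity, i.e. $b=0$, a contradiction; hence there are no solutions and $\nabla_F(a,b)=0$. For the converse I would argue contrapositively: if $F$ is not PN, some $g_a$ with $a \neq 0$ fails to be injective, so there exist $X_1 \neq X_2$ with $g_a(X_1) = g_a(X_2)$; setting $b := X_2 - X_1 \neq 0$ makes $X_1$ a solution of $g_a(X+b) = g_a(X)$, so $\nabla_F(a,b) \geq 1$, contradicting the assumed vanishing over all $ab \neq 0$.

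The argument is short and amounts to careful bookkeeping; there is no single difficult step. The one point deserving attention is the clean dictionary between the PN property and the injectivity of each $g_a$, together with the recognition that $\nabla_F(a,b)$ measures exactly the failure of that injectivity in a prescribed direction. Once the second-order difference is expressed through $g_a$, both directions of the equivalence fall out from matching ``collisions of $g_a$'' with ``nonzero entries of $\nabla_F$''.
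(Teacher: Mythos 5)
Your proof is correct, but note that the paper does not prove this lemma at all: it is imported verbatim as \cite[Lemma 2.5]{LYT}, so there is no in-paper argument to compare against. Your reduction is the standard one and matches the paper's definitions exactly: writing the second-order difference as $D_F(X+b,a)-D_F(X,a)$, using that $\Delta_F=1$ is equivalent to each derivative map $g_a=D_F(\cdot,a)$, $a\neq 0$, being a bijection of the finite field (injectivity suffices by counting), and observing that the ``moreover'' clause follows because the second-order zero differential uniformity for $p>2$ is by definition the maximum over $a,b\in\F_{p^n}^*$, i.e.\ precisely the range $ab\neq 0$. Both directions, including the contrapositive with $b:=X_2-X_1$, are sound, so your write-up would serve as a complete self-contained proof of the cited result.
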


We will be using the following result by Coulter and Henderson~\cite{RH} to determine the roots of trinomial $X^{p^k}-AX-B$ in some of the results.

\begin{lem}
\label{L1} 
\cite[Theorem 3]{RH}
 Let $k$ be a non-negative integer and $F(X) = X^{p^k}-AX-B \in \F_{p^n}[X]$, $A \neq 0$. Let $d= \gcd(k,n)$, $m=n/d$ and $\Tr_d^n$ be the relative trace from $\F_{p^n}$ to $\F_{p^d}$. For $0\leq i \leq m-1$, define $\displaystyle t_i=\sum_{j=i}^{m-2} p^{k(j+1)}$. Put $\alpha_0 = A$ and $\beta_0=B$. If $m>1$, then for $1\leq r \leq m-1$, we let $\alpha_r = A^{1+p^k+p^{2k}+\cdots+p^{kr}}$ and $\                                      \displaystyle \beta_r = \sum_{i=0}^{r}A^{s_i} B^{p^{ki}}$ where $\displaystyle s_i = \sum_{j=i}^{r-1} p^{k(j+1)}$ for $0 \leq i \leq r-1$ and $s_r =0$.
 \begin{enumerate}[(i)]
\item If $\alpha_{m-1} = 1$ and $\beta_{m-1} \neq 0$ then the trinomial $F$ has no roots in $\F_{p^n}$.
\item If $\alpha_{m-1}\neq 1$ then $F$ has a unique root, namely $X = \dfrac{\beta_{m-1}}{1-\alpha_{m-1}}$.
\item If $\alpha_{m-1} = 1, \beta_{m-1} = 0$, $F$ has $p^d$ roots in $\F_{p^n}$  given by $x+\delta\tau$, where $\delta \in \F_{p^d}$, $\tau$ is fixed in $\F_{p^n}$ with $\tau^{p^k-1} = A$ (that is, a $(p^k-1)$-root of $A$), and, for any $c \in \F_{p^n}^{*}$, satisfying $\Tr_d (c)\neq 0$ then $\displaystyle x = \frac{1}{\Tr_d (c)} \sum_{i=0}^{m-1} \left(\sum_{j=0}^{i} c^{p^{kj}}\right) A^{t_i} B^{p^{ki}}$. 
\end{enumerate}
\end{lem}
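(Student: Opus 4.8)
The plan is to treat the trinomial additively. Writing $L(X) := X^{p^k} - AX$, the Frobenius $X \mapsto X^{p^k}$ is $\F_p$-linear, so $L$ is an $\F_p$-linear endomorphism of $\F_{p^n}$ and the equation $F(X)=0$ becomes the affine problem $L(X)=B$. Consequently its solution set is either empty or a coset of $\ker L$, and the whole statement reduces to computing $\ker L$, deciding solvability, and exhibiting one particular solution in the solvable case. First I would compute $\ker L$: a nonzero $X$ lies in the kernel iff $X^{p^k-1}=A$, and since $\gcd(p^k-1,\,p^n-1)=p^d-1$, this is solvable iff $A^{(p^n-1)/(p^d-1)}=1$, i.e.\ iff the norm $N_{\F_{p^n}/\F_{p^d}}(A)=1$, in which case there are exactly $p^d-1$ nonzero solutions. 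Thus $|\ker L|=p^d$ when $N_{\F_{p^n}/\F_{p^d}}(A)=1$ and $\ker L=\{0\}$ otherwise; moreover in the former case the nonzero kernel elements are precisely $\{\delta\tau:\delta\in\F_{p^d}^*\}$ for any fixed $\tau$ with $\tau^{p^k-1}=A$ (since $d\mid k$ gives $(\delta\tau)^{p^k}=\delta A\tau$ for $\delta\in\F_{p^d}$), which already explains the shape $x+\delta\tau$ in part (iii).

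The second ingredient is the iteration producing $\alpha_{m-1}$ and $\beta_{m-1}$. Applying $X \mapsto X^{p^k}$ repeatedly to $X^{p^k}=AX+B$ and substituting at each step, an easy induction gives $X^{p^{k(r+1)}}=\alpha_r X + \beta_r$ with exactly the $\alpha_r,\beta_r$ of the statement (the recursions $\alpha_r=A\,\alpha_{r-1}^{p^k}$ and $\beta_r=\alpha_{r-1}^{p^k}B+\beta_{r-1}^{p^k}$ reproduce the displayed closed forms). Because $d=\gcd(k,n)$ forces $n\mid mk$, we have $X^{p^{mk}}=X$ for every $X\in\F_{p^n}$, so taking $r=m-1$ yields the key relation $X=\alpha_{m-1}X+\beta_{m-1}$, valid for every root $X$. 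I would then identify $\alpha_{m-1}=N_{\F_{p^n}/\F_{p^d}}(A)$: writing $k=dk'$ with $\gcd(k',m)=1$, the residues $jk \bmod n$ for $0\le j\le m-1$ run over all multiples of $d$, so $\sum_{j}p^{jk}\equiv\sum_{i}p^{id}\pmod{p^n-1}$ and hence $\alpha_{m-1}=A^{(p^n-1)/(p^d-1)}=N_{\F_{p^n}/\F_{p^d}}(A)$. This ties the iteration back to the kernel computation: $\alpha_{m-1}=1$ is exactly the condition $N_{\F_{p^n}/\F_{p^d}}(A)=1$ under which $\ker L$ is nontrivial.

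The three cases now fall out. If $\alpha_{m-1}\neq1$ then $\ker L=\{0\}$, so $L$ is injective, hence bijective on the finite set $\F_{p^n}$, giving a unique root; the relation $X=\alpha_{m-1}X+\beta_{m-1}$ then identifies it explicitly as $X=\beta_{m-1}/(1-\alpha_{m-1})$, proving (ii). If $\alpha_{m-1}=1$ and $\beta_{m-1}\neq0$, the same relation degenerates to $0=\beta_{m-1}\neq0$, a contradiction, so $F$ has no root, proving (i). If $\alpha_{m-1}=1$ and $\beta_{m-1}=0$, the relation is vacuous, $B\in\im L$, and the root set is a full coset $x+\ker L=x+\{\delta\tau:\delta\in\F_{p^d}\}$ of size $p^d$; it remains only to produce one particular solution $x$, which is the displayed trace formula.

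I expect the genuine work to lie in part (iii): verifying that the explicit $x=\frac{1}{\Tr_d(c)}\sum_{i=0}^{m-1}\bigl(\sum_{j=0}^{i}c^{p^{kj}}\bigr)A^{t_i}B^{p^{ki}}$ actually satisfies $L(x)=B$. This is an additive Hilbert~90 construction: the weights $\sum_{j\le i}c^{p^{kj}}$ are designed so that forming $x^{p^k}-Ax$ collapses, the interior terms of the doubly-indexed sum cancel telescopically, and the surviving boundary contributions — simplified using $\beta_{m-1}=0$ and $\alpha_{m-1}=1$ to kill the leftover — reassemble into $\frac{\Tr_d(c)}{\Tr_d(c)}\,B=B$. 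Here the constraint $\Tr_d(c)\neq0$ is precisely what makes the normalization legitimate, and such a $c$ exists because the relative trace $\F_{p^n}\to\F_{p^d}$ is surjective. Checking this cancellation carefully, with the exponent bookkeeping in $t_i$ and $s_i$ read modulo $p^n-1$, is the one step demanding a careful (if routine) computation rather than a structural argument.
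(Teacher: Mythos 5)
The paper offers no proof of this statement: it is imported verbatim from Coulter and Henderson \cite[Theorem 3]{RH}, so the only meaningful comparison is with that original argument, and yours is essentially the same route --- iterate the Frobenius on $X^{p^k}=AX+B$ to get $X^{p^{k(r+1)}}=\alpha_r X+\beta_r$, specialize to $r=m-1$ where $X^{p^{km}}=X$, and split into the three cases via $X=\alpha_{m-1}X+\beta_{m-1}$. Your additive packaging ($L(X)=X^{p^k}-AX$ is $\F_{p^d}$-linear, the solution set is empty or a coset of $\ker L$, $\ker L=\{\delta\tau:\delta\in\F_{p^d}\}$ when the norm condition holds, and $\alpha_{m-1}=A^{(p^n-1)/(p^d-1)}$ via the permutation $j\mapsto jk'\bmod m$) is correct and handles (i) and (ii) cleanly: injectivity of $L$ plus the linear relation pins down the unique root in (ii), and $0=\beta_{m-1}\neq 0$ kills (i). The one genuine incompleteness is in (iii): you assert ``$B\in\im(L)$'' \emph{before} producing a particular solution, which as written is circular, since when $\alpha_{m-1}=1$ and $\beta_{m-1}=0$ the relation $X=\alpha_{m-1}X+\beta_{m-1}$ is vacuous and yields no existence statement; existence comes only from the verification you defer to the last paragraph and never execute. (Alternatively one can avoid the explicit formula: for \emph{every} $X$ the iteration gives $\beta_{m-1}(L(X))=(1-\alpha_{m-1})X$, so when $\alpha_{m-1}=1$ the image of $L$ lies in the kernel of the additive map $\Phi(B)=\sum_{i=0}^{m-1}A^{t_i}B^{p^{ki}}$, whose reduced degree bounds $\#\ker\Phi\le p^{(m-1)d}=p^{n-d}=\#\im(L)$, forcing $\ker\Phi=\im(L)$.) For the record, the deferred computation does go through exactly by the telescoping you describe: with $T=\Tr_d(c)$ and $w_i=\sum_{j=0}^{i}c^{p^{kj}}$, the identities $w_i^{p^k}=w_{i+1}-c$, $w_{m-1}^{p^k}=T$, and $A^{p^k t_i}=A\cdot A^{t_{i+1}}$ give
\[
x^{p^k}-Ax \;=\; B-\frac{Ac}{T}\,\beta_{m-1},
\]
so in fact only $\beta_{m-1}=0$ is used in the cancellation --- contrary to your sketch, $\alpha_{m-1}=1$ plays no role there (consistently: if $\beta_{m-1}=0$ but $\alpha_{m-1}\neq1$, the formula returns the unique root). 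Filling in this display closes the argument; as submitted, your writeup is a correct outline of the source's proof rather than a complete proof.
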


The following two lemmas discuss the solutions of quadratic and cubic equations, respectively.
\begin{lem} \label{L3}\cite[Lemma 11]{EFRS}
 Let $n$ be a positive integer. The equation $X^2+aX+b=0$, with $a,b \in \F_{2^n}$, $a \neq 0$, has two solutions in $\F_{2^n}$ if $\Tr\left(\frac{b}{a^2}\right)=0$, and no solution otherwise.
\end{lem}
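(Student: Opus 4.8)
The plan is to reduce the given quadratic to the standard Artin--Schreier form and then exploit the additive structure of the map $Y \mapsto Y^2 + Y$ together with the absolute trace. Since $a \neq 0$, I would first substitute $X = aY$. This turns $X^2 + aX + b = 0$ into $a^2 Y^2 + a^2 Y + b = 0$, and dividing through by $a^2$ yields the equivalent equation
\begin{equation*}
Y^2 + Y = \frac{b}{a^2}.
\end{equation*}
Writing $c = b/a^2$, the original equation has exactly as many roots in $\F_{2^n}$ as $Y^2 + Y = c$ does, via the bijection $X \leftrightarrow Y = X/a$.

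Next I would analyze the map $L(Y) = Y^2 + Y$. In characteristic $2$ this map is $\F_2$-linear (additive), since $L(Y+Z) = (Y+Z)^2 + (Y+Z) = Y^2 + Z^2 + Y + Z = L(Y) + L(Z)$. Its kernel consists of the solutions of $Y^2 + Y = 0$, i.e. $Y(Y+1) = 0$, so $\ker L = \{0, 1\} = \F_2$, which has size $2$. Hence the image $\operatorname{im} L$ is an $\F_2$-subspace of $\F_{2^n}$ of size $2^{n-1}$, and $L$ is two-to-one onto its image. The key step is to identify this image precisely: for any $Y$, since the absolute trace satisfies $\Tr(Y^2) = \Tr(Y)$ (the Frobenius preserves the trace), we get $\Tr(L(Y)) = \Tr(Y^2) + \Tr(Y) = 0$. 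Thus $\operatorname{im} L \subseteq \ker \Tr$. Because $\Tr : \F_{2^n} \to \F_2$ is $\F_2$-linear and surjective, its kernel also has size $2^{n-1}$, so the inclusion $\operatorname{im} L \subseteq \ker \Tr$ of equal-size subspaces forces $\operatorname{im} L = \ker \Tr$.

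From this the conclusion follows immediately. The equation $Y^2 + Y = c$ is solvable in $\F_{2^n}$ precisely when $c \in \operatorname{im} L = \ker \Tr$, that is, when $\Tr(c) = \Tr(b/a^2) = 0$; otherwise it has no solution. When $\Tr(b/a^2) = 0$, the two-to-one property of $L$ (equivalently, if $Y_0$ is a root then so is $Y_0 + 1 \neq Y_0$) produces exactly two values of $Y$, and translating back through $X = aY$ yields exactly two solutions $X \in \F_{2^n}$. I do not anticipate a genuine obstacle here; the only point requiring care is the cardinality argument establishing $\operatorname{im} L = \ker \Tr$, which rests on the identity $\Tr(Y^2) = \Tr(Y)$ and the surjectivity of the trace.
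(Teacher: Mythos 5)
Your proof is correct and follows the standard (and essentially the only natural) argument for this classical fact, which the paper simply cites from \cite[Lemma 11]{EFRS}: the substitution $X=aY$ reducing to the Artin--Schreier equation $Y^2+Y=b/a^2$, and the identification of the image of the $\F_2$-linear map $Y\mapsto Y^2+Y$ with the kernel of the absolute trace via the identity $\Tr(Y^2)=\Tr(Y)$ and a cardinality count. There is nothing to add; the two-to-one structure correctly yields exactly two solutions when $\Tr\left(\frac{b}{a^2}\right)=0$ and none otherwise.
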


\begin{lem}\label{L4}\cite[Theorem 1]{William}
The factorisation of $F(X)=X^3+X+a$ over $\F_{2^n}$ with $a \neq 0$ is characterized as follows:
\begin{enumerate}
 \item $F = (1, 1, 1)$ (a product of three linear factors) if and only if $\Tr\left(\frac{1}{a}\right)=\Tr(1)$, $t_1$ and $t_2$ are cubes in $\F_{2^n}$(n even), $\F_{2^{2n}}$(n odd), where $t_1$, $t_2$ are roots of $t^2 + at + 1 = 0$.
 \item $F = (1, 2)$  (a product of one linear and one quadratic factors)  if and only if $\Tr\left(\frac{1}{a}\right)\neq \Tr(1)$.
 \item $F = (3)$ (irreducible) if and only if $\Tr\left(\frac{1}{a}\right)=\Tr(1)$, $t_1$ and $t_2$ are not cubes in $\F_{2^n}$(n even), $\F_{2^{2n}}$(n odd), where $t_1$, $t_2$ are roots of $t^2 + at + 1 = 0$.
\end{enumerate}
\end{lem}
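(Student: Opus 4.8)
The plan is to solve $F(X) = X^3 + X + a = 0$ explicitly by a characteristic-two analogue of Cardano's substitution, and then to count how many of the resulting roots actually lie in $\F_{2^n}$ by tracking the Frobenius action. First I would set $X = u + u^{-1}$; expanding in characteristic $2$ gives $(u+u^{-1})^3 = u^3 + u^{-3} + (u + u^{-1})$, so that $F(u + u^{-1}) = u^3 + u^{-3} + a$, and hence $X = u + u^{-1}$ is a root precisely when $w := u^3$ satisfies the resolvent quadratic $w^2 + a w + 1 = 0$. Its two roots $t_1, t_2$ satisfy $t_1 t_2 = 1$, so $t_2 = t_1^{-1}$, and by Lemma~\ref{L3} the resolvent splits over $\F_{2^n}$ if and only if $\Tr(1/a^2) = \Tr(1/a) = 0$. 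The three roots of $F$ in $\overline{\F_2}$ are then $u_0\omega^j + u_0^{-1}\omega^{-j}$ for $j = 0,1,2$, where $u_0$ is any cube root of $t_1$ and $\omega \in \F_4$ is a primitive cube root of unity; note that the six sixth-roots of the equation $u^6 + au^3 + 1 = 0$ collapse to exactly three values of $X$ under $u \leftrightarrow u^{-1}$.

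Next I would record that a monic cubic over $\F_{2^n}$ with vanishing quadratic coefficient has root-sum $0$, so if two roots lie in $\F_{2^n}$ the third does too; hence the only possible factorization types are $(1,1,1)$, $(1,2)$, and $(3)$, corresponding to $3$, $1$, and $0$ roots in $\F_{2^n}$. The count is governed by two pieces of data: whether the resolvent splits over $\F_{2^n}$ (controlled by $\Tr(1/a)$), and whether cubing is a bijection on $\F_{2^n}^*$, which depends on $\gcd(3, 2^n-1)$ and hence on the parity of $n$, equivalently on whether $\omega \in \F_{2^n}$. I would then split into the two parity regimes. For $n$ odd one has $\omega \notin \F_{2^n}$ and cubing is a bijection on $\F_{2^n}^*$; then $\Tr(1/a) = 0$ forces $t_1 \in \F_{2^n}$ with a unique cube root in $\F_{2^n}$, but the two remaining roots are conjugate under $x \mapsto x^{2^n}$, giving exactly one root and type $(1,2)$, while $\Tr(1/a) = 1$ sends $t_1$ into $\F_{2^{2n}}$ and reduces everything to the cube-ness of $t_1, t_2$ in $\F_{2^{2n}}$. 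For $n$ even one has $\omega \in \F_{2^n}$, so $\Tr(1/a) = 1$ keeps $t_1$ outside $\F_{2^n}$ and yields type $(1,2)$, whereas $\Tr(1/a) = 0$ puts $t_1 \in \F_{2^n}$ and again reduces to whether $t_1$ is a cube, now in $\F_{2^n}$.

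Using $\Tr(1) = n \bmod 2$, these four sub-cases combine into the stated dichotomy: type $(1,2)$ occurs exactly when $\Tr(1/a) \neq \Tr(1)$, and when $\Tr(1/a) = \Tr(1)$ the cubic is either fully split or irreducible according to whether the resolvent roots $t_1, t_2$ are cubes in the relevant field. To finish I would verify the root count in this last regime by a Frobenius computation: writing $u_0^3 = t_1$ and applying $\phi = (\cdot)^{2^n}$, one has $\phi(u_0)^3 = \phi(t_1) \in \{t_1, t_1^{-1}\}$ according as the resolvent splits or not, which forces $\phi(u_0) = \omega^k u_0^{\pm 1}$ for some $k$; the hypothesis that $t_1$ is a cube in the stated field pins down $k$ so that $\phi$ fixes each of the three values $u_0\omega^j + u_0^{-1}\omega^{-j}$, giving $3$ roots (type $(1,1,1)$), while non-cubeness makes $\phi$ cycle the roots, giving $0$ roots (type $(3)$).

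The hardest part will be exactly this Frobenius bookkeeping, and in particular showing that the two a priori different mechanisms producing type $(1,2)$, namely failure of the resolvent to split when $n$ is even versus unavailability of $\omega$ when $n$ is odd, are correctly unified by the single clean condition $\Tr(1/a) \neq \Tr(1)$. Care is also needed to confirm that ``cube in $\F_{2^n}$'' ($n$ even) and ``cube in $\F_{2^{2n}}$'' ($n$ odd) are the right cube-ness conditions, since an element of $\F_{2^n}$ can be a non-cube there yet a cube in a larger field.
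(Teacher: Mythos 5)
The paper does not prove this lemma at all: it is imported verbatim as \cite[Theorem 1]{William} (Williams, 1975), so there is no internal proof to compare against, and your task was really to reconstruct Williams' argument. Your reconstruction is sound and is essentially the classical route (also the one Williams follows): the substitution $X=u+u^{-1}$, which in characteristic $2$ gives $F(u+u^{-1})=u^3+u^{-3}+a$, the resolvent $t^2+at+1=0$ with $t_1t_2=1$, splitting controlled by $\Tr(1/a^2)=\Tr(1/a)$ via Lemma~\ref{L3}, the root-sum-zero observation ruling out exactly two rational roots, and the Frobenius action on the three values $u_0\omega^j+u_0^{-1}\omega^{-j}$. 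Two spots in your ``Frobenius bookkeeping'' deserve to be written out, since they are where the unification you worry about actually happens. First, for $n$ odd with $\Tr(1/a)=\Tr(1)=1$: if $t_1$ is a cube in $\F_{2^{2n}}$ with cube root $u_0$, then $\phi(u_0)=\omega^k u_0^{-1}$ gives $u_0^{2^n+1}=\omega^k$, and since the norm $u_0^{2^n+1}$ lies in $\F_{2^n}$ while $\omega\notin\F_{2^n}$ ($n$ odd), this \emph{forces} $k=0$; note that $k$ is unchanged when $u_0$ is replaced by $u_0\omega$, so you cannot simply ``choose'' $k=0$ --- the norm argument is genuinely needed, and it is false that cube-ness alone obviously pins $k$. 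Second, for $n$ even with $\Tr(1/a)=1\neq 0=\Tr(1)$: here $t_1^{2^n+1}=t_1t_2=1$, and since $3\mid 2^n-1$ makes $(2^{2n}-1)/3$ a multiple of $2^n+1$, the element $t_1$ is \emph{automatically} a cube in $\F_{2^{2n}}$; then $\phi(u_0)=\omega^k u_0^{-1}$ with $\phi(\omega)=\omega$ makes $\phi$ act on the index $j$ by $j\mapsto -j-k \pmod 3$, an involution with exactly one fixed point, yielding precisely one rational root and type $(1,2)$ regardless of $k$ --- this is the mechanism behind the clean condition $\Tr(1/a)\neq\Tr(1)$ in this regime, and it is absent from your sketch. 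With these two computations supplied (plus the one-line separability check that $F'(X)=X^2+1$ vanishes at a root only if $a=0$), your proof is complete and correct.
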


Recently,  Budhagyan and  Pal~\cite{BP} obtained a new class of binomials with low differential uniformity, stated below.  In the next section, we will look at its second-order zero differential uniformity.

\begin{lem}\cite[Theorem 4.2]{BP}\label{L2}
Let $p$ be an odd prime and $n$ be a positive integer. Then the binomial $F(X) = X^{p^n-1}+uX^2$, $u \in \F_{p^n}^{*}$ over $\F_{p^n}$ is APN if
\begin{equation*}
\begin{cases}
\chi(u)=-1 &~\mbox{~and~} p^n \equiv 1(\mbox{~mod~} 4)\\
\chi(u)=1 &~\mbox{~and~} p^n \equiv 3(\mbox{~mod~} 4).
\end{cases}
\end{equation*}
and differentially $3$-uniform, otherwise.
\end{lem}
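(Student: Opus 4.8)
The plan is to exploit the fact that $X^{p^n-1}$ is the indicator function of $\F_{p^n}^*$: it equals $1$ for every nonzero $X$ and $0$ at $X=0$. Hence $F(X)=uX^2+\mathbf 1_{X\neq 0}$, and the whole problem reduces to counting solutions of $D_F(X,a)=F(X+a)-F(X)=b$ for $a\in\F_{p^n}^*$ and $b\in\F_{p^n}$. First I would expand
$$D_F(X,a)=u\big((X+a)^2-X^2\big)+\big(\mathbf 1_{X+a\neq 0}-\mathbf 1_{X\neq 0}\big)=2auX+ua^2+\varepsilon(X),$$
where the correction term $\varepsilon(X):=\mathbf 1_{X+a\neq 0}-\mathbf 1_{X\neq 0}$ vanishes for $X\notin\{0,-a\}$, equals $+1$ at $X=0$, and equals $-1$ at $X=-a$ (since $a\neq 0$, the points $0$ and $-a$ are distinct and both nonzero).

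Next I would split the count $\Delta_F(a,b)$ into the three regimes dictated by $\varepsilon$. On the generic set $X\notin\{0,-a\}$ the equation is linear, $2auX=b-ua^2$, which (as $p$ is odd and $a,u\neq 0$) has the single root $X_0=(b-ua^2)/(2au)$; this root counts only when $X_0\notin\{0,-a\}$. The two exceptional points contribute separately: $X=0$ solves $D_F(X,a)=b$ exactly when $b=1+ua^2$, and $X=-a$ exactly when $b=-1-ua^2$. Since these three candidate solutions are distinct field elements whenever valid, one immediately obtains $\Delta_F(a,b)\le 3$, so $F$ is at worst differentially $3$-uniform.

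It then remains to decide when the value $3$ is attained, which forces all three candidates to be simultaneous solutions. Equating the two exceptional conditions $1+ua^2=-1-ua^2$ gives $ua^2=-1$, whence $b=0$, and a direct check shows the generic root $X_0=-a/2$ indeed avoids $\{0,-a\}$. Thus a $3$-entry exists if and only if $ua^2=-1$ is solvable for $a\in\F_{p^n}^*$, i.e. iff $-u^{-1}$ is a nonzero square, which by multiplicativity of $\chi$ is the condition $\chi(-1)\chi(u)=1$. Invoking the arithmetic fact $\chi(-1)=(-1)^{(p^n-1)/2}$, so that $\chi(-1)=1\iff p^n\equiv 1\pmod 4$, and combining the two sign cases, yields precisely that a $3$-entry exists iff ($p^n\equiv 1\pmod 4$ and $\chi(u)=1$) or ($p^n\equiv 3\pmod 4$ and $\chi(u)=-1$)—exactly the complement of the stated APN hypotheses.

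Finally, in the complementary (APN-candidate) regime no $a$ satisfies $ua^2=-1$, so the two exceptional conditions never hold together and $\Delta_F(a,b)\le 2$; to certify genuine APN-ness I must still exhibit a $2$-entry. Taking $b=1+ua^2$ makes $X=0$ a solution, and the generic root $X_0=1/(2au)$ is a second solution provided $ua^2\neq -1/2$, a value avoidable by a suitable choice of $a$ whenever $p^n>3$ (the set $\{ua^2:a\in\F_{p^n}^*\}$ has $(p^n-1)/2$ elements). I expect the main obstacle to be precisely this bookkeeping at the exceptional points $X\in\{0,-a\}$—in particular ruling out coincidences $X_0\in\{0,-a\}$—together with the handling of the smallest fields (e.g.\ $\F_3$, where $X^{p^n-1}=X^2$ collapses the binomial and must be treated separately); the passage from solvability of $ua^2=-1$ to the quadratic-character condition is then routine.
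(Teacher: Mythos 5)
The paper itself offers no proof of this lemma: it is quoted from Budaghyan--Pal \cite[Theorem 4.2]{BP} as a known result, so there is no in-paper argument to compare yours against; I therefore checked your proof on its own merits, and it is correct and complete for $p^n>3$. Writing $F(X)=uX^2+\mathbf{1}_{X\neq 0}$ and splitting $D_F(X,a)=b$ into the linear equation $2auX=b-ua^2$ off $\{0,-a\}$ plus the two exceptional checks $b=1+ua^2$ (at $X=0$) and $b=-1-ua^2$ (at $X=-a$) immediately yields $\Delta_F\le 3$; your characterization of the $3$-entries is right (both exceptional conditions force $ua^2=-1$, $b=0$, and then $X_0=-a/2\notin\{0,-a\}$ since $p$ is odd), and solvability of $a^2=-u^{-1}$ is exactly $\chi(-1)\chi(u)=1$, which with $\chi(-1)=(-1)^{(p^n-1)/2}$ is precisely the complement of the stated APN hypotheses. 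Your certification of a genuine $2$-entry in the APN regime (take $b=1+ua^2$ with $a$ chosen so that $ua^2\neq -1/2$, possible since $\{ua^2:a\in\F_{p^n}^*\}$ has $(p^n-1)/2\ge 2$ elements when $p^n\ge 5$) closes the gap between $\Delta_F\le 2$ and $\Delta_F=2$. Your flag on $\F_3$ is not mere caution but a real boundary case of the statement as transcribed: for $p^n=3$ and $u=1$ the stated hypotheses hold ($\chi(1)=1$, $3\equiv 3\pmod 4$), yet $F(X)=X^2+X^2=2X^2$ is PN rather than APN, and your proof correctly locates where the general argument breaks (the set $\{ua^2\}$ is then a single element equal to $-1/2$). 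So the lemma should be read with $p^n>3$, and your argument proves it in that range.
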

\section{The second-order zero differential spectrum for functions over finite fields of odd characteristic}\label{S3}

Boukerrou et al.~\cite{Bouk} showed that all the non trivial
second-order zero differential spectra of APN functions in even characteristic are $0$ (see Lemma~\ref{L00}). Further, Li et al.~\cite{LYT} gave the characterization of PN functions in terms of the second-order zero differential spectra over $\F_{p^n}$ for an odd prime $p$, as given in Lemma~\ref{L0}. Surely, a natural question is whether there are other constraints on the second-order zero differential spectra for APN functions in odd characteristic. In the following theorem, we show that for any odd or even function $F$ over $\F_{p^n}$, where $p$ is an odd prime, if the second-order differential uniformity is $1$, then $F$ must be an APN function.

\begin{thm}\label{T0}
Let $F(X)$ be an odd or even function over $\F_{p^n}$, where $p$ is an odd prime. If the differential uniformity $\Delta_F$ of $F$ is strictly greater than $2$, then the second-order zero differential uniformity $\nabla_F$ of $F$ is never equal to $1$. In other words, if  $\nabla_F=1$, then $F$ is an APN function over $\F_{p^n}$. 
\end{thm}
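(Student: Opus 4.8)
The plan is to establish the contrapositive in the sharper form: if $\Delta_F \geq 3$, then $\nabla_F \geq 2$. Since $\nabla_F \geq 2$ trivially excludes the value $1$, this gives the first assertion, and the APN conclusion will follow at the end. Throughout I would write
\[
\Phi_{a,c}(X) := F(X+a+c) - F(X+a) - F(X+c) + F(X),
\]
so that $\nabla_F(a,c) = \#\{X \in \F_{p^n} : \Phi_{a,c}(X) = 0\}$, and denote this zero set by $S(a,c)$. The engine of the argument is an elementary \emph{collision-to-zero} identity: if $X_1 \neq X_2$ both satisfy $F(X+a) - F(X) = b$, then setting $c := X_1 - X_2 \neq 0$ and expanding gives $\Phi_{a,c}(X_2) = [F(X_1+a)-F(X_1)] - [F(X_2+a)-F(X_2)] = b - b = 0$, so $X_2 \in S(a,c)$. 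Thus any two solutions lying in a single row of the DDT automatically manufacture a second-order zero.

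Next I would record the single place where the odd/even hypothesis is used. For $F$ odd or even and any $a,c$, the affine involution $\sigma(Y) := -Y - a - c$ maps $S(a,c)$ to itself. Indeed, substituting and collapsing the arguments,
\[
\Phi_{a,c}(-Y-a-c) = F(-Y) - F(-Y-c) - F(-Y-a) + F(-Y-a-c),
\]
which equals $\Phi_{a,c}(Y)$ when $F$ is even and $-\Phi_{a,c}(Y)$ when $F$ is odd; in either case zeros are sent to zeros. For a general $F$ no such relation holds, which is exactly why the hypothesis is needed.

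Now assume $\Delta_F \geq 3$ and fix $a \neq 0$ together with $b$ admitting three distinct solutions $X_1, X_2, X_3$ of $F(X+a)-F(X) = b$. Among the three unordered pairs they form, at most one can satisfy $X_i + X_j = -a$, since two such relations would force two of the $X_i$ to coincide; hence I may choose a pair with $X_1 + X_2 \neq -a$. Put $c := X_1 - X_2 \neq 0$. The collision-to-zero identity gives $X_2 \in S(a,c)$, and the symmetry of the previous step gives $\sigma(X_2) = -X_2 - a - c = -X_1 - a \in S(a,c)$ as well. These two points are distinct precisely because $X_1 + X_2 \neq -a$, so $\nabla_F(a,c) \geq 2$ with $a,c \in \F_{p^n}^*$, whence $\nabla_F \geq 2$.

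The contrapositive now reads: $\nabla_F = 1$ forces $\Delta_F \leq 2$. A PN function has $\nabla_F = 0$ by Lemma~\ref{L0}, so $\nabla_F = 1$ rules out the PN case and leaves $\Delta_F = 2$, i.e.\ $F$ is APN. The only delicate point, and the one I expect to be the main obstacle, is the distinctness bookkeeping in the third paragraph: one must guarantee that the two produced zeros $X_2$ and $-X_1-a$ do not collapse, which is exactly what the selection $X_1 + X_2 \neq -a$ secures; everything else reduces to direct substitution. As a by-product, for odd $F$ the unique fixed point $-(a+c)/2$ of $\sigma$ always lies in $S(a,c)$ (the expansion of $\Phi_{a,c}$ there cancels identically), so each $\nabla_F(a,c)$ is odd and the bound can be upgraded to $\nabla_F \geq 3$.
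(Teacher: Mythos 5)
Your proof is correct, and it rests on the same two engines as the paper's: the collision-to-zero identity (two solutions in one DDT row manufacture a zero of the second-order differential) and the involution $\sigma(Y)=-Y-a-c$ on the zero set, which is exactly where the odd/even hypothesis enters. But you deploy them differently. The paper argues by contradiction from $\nabla_F=1$: it takes three DDT solutions $X_1,X_2,X_3$, applies the uniqueness hypothesis twice---once to the parameters $(u,a')$ with $u=X_2-X_1$ and once to $(v,a')$ with $v=X_3-X_2$---to force $X_1=-\frac{u+a'}{2}$ and $X_2=-\frac{v+a'}{2}$, and then derives $u=-v$, hence $X_1=X_3$, a contradiction. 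You instead prove the sharper quantitative contrapositive $\Delta_F\geq 3\Rightarrow\nabla_F\geq 2$ directly: among the three pairs of solutions at most one can satisfy $X_i+X_j=-a$, so a pair can be chosen for which the manufactured zero $X_2$ and its mirror $\sigma(X_2)=-X_1-a$ are distinct. This buys a cleaner argument (one application of the symmetry, no uniqueness bookkeeping) and a strictly stronger output: $\nabla_F\geq 2$ rather than merely $\nabla_F\neq 1$, upgraded to $\nabla_F\geq 3$ for odd $F$ by your parity observation. That by-product---the fixed point $-\frac{a+c}{2}$ is always a zero when $F$ is odd, and the zero set pairs up under $\sigma$ otherwise---independently recovers the content of Remarks~\ref{r2} and~\ref{r3} of the paper. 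Your closing step, invoking Lemma~\ref{L0} to exclude the PN case and conclude $\Delta_F=2$, matches the paper's appeal to the same lemma.
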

\begin{proof}
It is given that the function $F(X)$ has differential uniformity strictly greater than $2$. Assume on the contrary that $\nabla_F =1$.  Now, for $a,b \in \F_{p^n}\setminus\{0\}$, we consider the equation
\begin{equation}\label{eq}
F (X + a + b) -F(X + b) -F(X + a)+ F(X) = 0.
\end{equation}
Since $F$ is odd or even, it is straightforward to see that if $X$ is a solution of Equation~\eqref{eq} then $-(X+a+b)$  is also a solution of  Equation~\eqref{eq}. Since $\Delta_F \geq 3$, there exists $(a',b') \in \F_{p^n}  \setminus \{0\} \times \F_{p^n} $ such that $\Delta_F(a',b') \geq 3$, that is, the difference equation $F(Z+a')-F(Z)=b'$ has at least three distinct solutions in $\F_{p^n}$. Suppose $X_1,X_2$ and $X_3$ are any three distinct solutions to $F(Z+a')-F(Z)=b'$ in $\F_{p^n}$. We may write $X_2=X_1+u$ and $X_3=X_2+v$ for some $u,v \in \F_{p^n}^{*}$. We now compute $\nabla_F(u,a')$ by simply looking at the solutions to the equation
\begin{equation}\label{eq1}
F(X + u + a') -F(X + u) -F(X + a') + F(X) = 0.
\end{equation}
It follows immediately that $X=X_1$ is a solution to the above Equation ~\eqref{eq1}, and as a consequence, $\nabla_F(u,a') \geq 1$. However, by our assumption that the second-order zero differential uniformity of $F$ is 1, we must have  $\nabla_F(u,a')=1$. As per our earlier observation, $-(X_1+u+a')$ is also a solution to Equation ~\eqref{eq1}. Thus, in view of the fact that $\nabla_F(u,a')=1$, we must have that $X_1=-(X_1+u+a')$, i.e., $X_1=\dfrac{-(u+a')}{2}$. Hence, $X_2=X_1+u=\dfrac{u-a'}{2}$. Similarly, one can show that $\nabla_F(v,a')=1$, with $X_2$ satisfying the equation
\begin{equation}\label{eq2}
 F(X + v + a') -F(X + v) -F(X + a') + F(X) = 0.
\end{equation}
This will yield us $X_2=\dfrac{-(v+a')}{2}$ by following the similar arguments as before. Now, comparing  both expressions for $X_2$, we obtain $u=-v$, which implies that $X_1=X_3$. Thus, we get a contradiction to the fact that $X_1$ and $X_3$ are distinct. Together with Lemma 2.5, this gives that, if  $\nabla_F=1$, then $F$ is an APN function over $\F_{p^n}$.  This completes the proof of both statements of the theorem.
\end{proof}

\begin{rmk}
 The converse of Theorem~\textup{\ref{T0}} does not hold. For instance, consider the function $F(X)=X^{14}$, which is APN over the finite field $\F_{5^3}$ but has second-order zero differential uniformity $4$.
 Moreover, the statement of Theorem~\textup{\ref{T0}} does not hold in general. To illustrate this, consider the function $F(X)=X^9+X^{4}$, which is a differentially $3$-uniform function over $\F_{11}$. However, it has second-order zero differential uniformity $1$.
\end{rmk}

\begin{rem} 
\label{r2} 
Note that the proof implies that, for $F$ even or odd ($p$ odd), if $\nabla_F=1$, if for a certain $a$ and $b$ there is a solution $X$, then the solution is $X=-\frac{a+b}{2}$, since $X=-X-a-b$.
Further, we note that, if $F$ is odd, then $X=-\frac{a+b}{2}$ is always a solution.
\end{rem}

\begin{cor} If $F$ is an odd function, then $F$ is not PN.
\end{cor}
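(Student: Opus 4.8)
The plan is to reach a contradiction with the PN characterization of Lemma~\ref{L0} by exploiting the symmetry already exposed in the proof of Theorem~\ref{T0}. I would argue by contradiction, assuming that $F$ is simultaneously odd and PN over $\F_{p^n}$ with $p$ odd.

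The first and decisive step is to observe, exactly as recorded in Remark~\ref{r2}, that oddness forces the second-order zero differential equation
$$F(X+a+b) - F(X+b) - F(X+a) + F(X) = 0$$
to possess the guaranteed solution $X = -\frac{a+b}{2}$ for \emph{every} pair $a, b \in \F_{p^n}$. The verification is a one-line substitution: evaluating the four arguments at $X=-\frac{a+b}{2}$ produces $\pm\frac{a+b}{2}$ and $\pm\frac{a-b}{2}$, and applying the oddness relation $F(-Y)=-F(Y)$ cancels the four terms in pairs. Consequently $\nabla_F(a,b) \geq 1$ holds unconditionally, in particular when $a$ and $b$ are both nonzero.

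The second step is simply to compare this with Lemma~\ref{L0}: a PN function must satisfy $\nabla_F(a,b) = 0$ for all $a, b$ with $ab \neq 0$. Choosing any $a, b \in \F_{p^n}^*$ (which is possible since $p^n \geq 3$) then yields $1 \leq \nabla_F(a,b) = 0$, the desired contradiction, so $F$ cannot be PN.

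I do not anticipate a genuine obstacle here, since the entire content of the argument is the single structural fact that an odd function always admits the fixed solution $X=-\frac{a+b}{2}$, which is flatly incompatible with the vanishing of every nontrivial entry demanded of a PN map by Lemma~\ref{L0}. The only point to state with a little care is that the nonzero pair $(a,b)$ lands precisely in the regime $ab \neq 0$ where that lemma predicts a zero entry, so that the two conclusions directly collide.
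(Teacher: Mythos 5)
Your proof is correct and follows exactly the paper's route: the paper's one-line proof invokes Remark~\ref{r2} (oddness forces the solution $X=-\frac{a+b}{2}$, which you verify by direct substitution) together with Lemma~\ref{L0} (PN forces $\nabla_F(a,b)=0$ whenever $ab\neq 0$), yielding the same contradiction. You merely make explicit the substitution that the paper delegates to the remark, so there is nothing to correct.
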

\begin{proof} It follows directly from Remark \ref{r2} and Lemma \ref{L0}.
\end{proof}
\begin{rem}
\label{r3}
Note that the proof also implies that, if  $F$ is an even function, the number of solutions to the equation is always even, unless $X=-X-a-1$ for some solution $X$, in which case $X=\frac{-a-1}{2}$. This implies that, unless $X=\frac{-a-1}{2}$ is a solution, the second-order zero differential uniformity of any even function $F$ is even.
\end{rem}

\begin{rem} 
It is worth noting that since both $\Delta_F$ and $\nabla_F$ are invariant under EA (extended affine) equivalence, the theorem is also true for a function EA-equivalent to an even or odd function. Recall that two vectorial functions $F, G$ on $\F_{p^n}$ (for any characteristic $p$) are EA-equivalent if  that there exist two affine permutations $P,Q$ and an affine function $A$ satisfying $G = P\circ F\circ Q+A$.  
\end{rem} 

We next compute the second-order zero differential spectrum of the differentially low uniform binomial  given in Lemma~\ref{L2} (which is APN under some conditions of the parameter $u$). 

\begin{thm}
Let $p$ be an odd prime and $n$ be a positive integer. Then the binomial $F(X) = X^{p^n-1}+uX^2$, $u \in \F_{p^n}^{*}$ over $\F_{p^n}$. Then for $a,b \in \F_{p^n}$,
\begin{equation*}
\nabla_F(a,b)=
\begin{cases}
p^n &~\mbox{~if~} ab=0,\\
1 &~\mbox{~if~} ab \neq0, p\neq3, a=-b \mbox{~and~} -1+abu=0,\\
2 &~\mbox{~if~} ab \neq0, p\neq3, a=b \mbox{~and~} -1+2abu=0 \mbox{~or~} 1+abu=0,\\
 &~\mbox{~or~} ab \neq 0, p\neq3, a=- b \mbox{~and~} 1+2abu=0, \\
&~\mbox{~or~} ab \neq 0, a\neq b,-b \mbox{~and~} -1+2abu=0  \mbox{~or~}1+2abu=0,\\
3 &~\mbox{~if~} ab \neq0, p=3, a=-b  \mbox{~and~}  -1+abu=0,\\
4&~\mbox{~if~} ab \neq0, p=3, a=b \mbox{~and~} 1+abu=0,\\
0&~\mbox{~otherwise}.
\end{cases}
\end{equation*} 
Notice that $\nabla_F(a,b)\geq3$ is possible only when $p=3$. Hence, $F$ is second-order zero differentially $2$-uniform if $p \neq 3$ and when $p=3$ it is second-order zero differential $4$-uniform.  
\end{thm}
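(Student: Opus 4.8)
The plan is to exploit the decomposition $F(X)=g(X)+uX^{2}$, where $g(X)=X^{p^{n}-1}$ is the indicator of $\F_{p^n}^{*}$, that is, $g(Y)=1$ for $Y\neq 0$ and $g(0)=0$. Because the second-order difference $\Phi\mapsto \Phi(X+a+b)-\Phi(X+b)-\Phi(X+a)+\Phi(X)$ is linear in $\Phi$, I would evaluate it on the two summands separately. For the quadratic part a direct expansion cancels every $X$-dependent term and leaves the constant $2abu$; for $g$, the value at a given $X$ depends only on which of the four shifted arguments $X,\,X+a,\,X+b,\,X+a+b$ vanish. When none of them vanishes the four values of $g$ are all $1$ and cancel, so the total difference equals $2abu$, which is nonzero since $a,b,u\in\F_{p^n}^{*}$ and $p$ is odd. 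Hence, outside the finite special set $S=\{0,\,-a,\,-b,\,-(a+b)\}$ the equation has no solution, and the whole problem reduces to testing the points of $S$.

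The case $ab=0$ is immediate, since a second-order difference with a zero step vanishes identically and gives $\nabla_F(a,b)=p^{n}$. For $ab\neq 0$ I would organise the argument by the coincidence pattern of $S$, which is controlled by the relations $a=b$, $a=-b$, and $a\neq\pm b$, together with the characteristic-three identity $-2\equiv 1$, which can make $-(a+b)$ collide with another element of $S$ when $a=b$. In each configuration I would record, for every point of $S$, the integer value of the indicator difference $D_g\in\{-2,-1,1,2\}$ --- the larger magnitudes arising exactly where two of the four arguments vanish at once --- and then impose $D_g+2abu=0$. This converts each special point into one of the scalar conditions $-1+abu=0$, $1+abu=0$, $-1+2abu=0$, or $1+2abu=0$, and $\nabla_F(a,b)$ is simply the number of points of $S$ whose attached condition holds.

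The crux, and the reason for the split between $p=3$ and $p\neq 3$, is the compatibility of these scalar conditions. For $p\neq 3$ the relevant pairs are incompatible: for instance $-1+2abu=0$ together with $1+2abu=0$ forces $2=0$, while $-1+abu=0$ together with $1+2abu=0$ forces both $abu=1$ and $3=0$. Thus at most one condition can hold for a given $(a,b)$, which caps $\nabla_F(a,b)$ at $2$. Precisely when $p=3$ these otherwise-incompatible conditions coalesce (for example $-1+abu=0$ and $1+2abu=0$ both become $abu=1$), and at the same time the collision $-2a=a$ merges special points, so several points of $S$ solve the equation simultaneously and $\nabla_F(a,b)$ rises to the tabulated values $3$ or $4$. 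The main obstacle I anticipate is bookkeeping rather than a conceptual difficulty: in characteristic three one must track the collisions inside $S$ with care, neither undercounting coincident solutions nor double-counting a point that meets two scalar conditions at once, and then check that the per-point conditions assemble exactly into the stated spectrum.
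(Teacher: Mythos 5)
Your proposal is correct and takes essentially the same route as the paper: the paper likewise reduces the second-order difference to the constant $2abu$ plus the contribution of the indicator part $X^{p^n-1}$, confines all possible solutions to $X\in\{0,-a,-b,-(a+b)\}$ (no solutions exist outside this set, since there the indicator terms cancel and $2abu\neq 0$), and extracts the same four scalar conditions $\pm 1+abu=0$ and $\pm 1+2abu=0$, with the characteristic-$3$ coalescence of conditions producing the larger values. Your closing caution about bookkeeping of coincident points is well placed: when $a=b$ the special points $-a$ and $-b$ coincide, whereas the paper's Cases 2 and 3 tally them as separate solutions, so a faithful execution of your plan demands exactly the care you flag.
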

\begin{proof}
 Now for $a,b \in \F_{p^n}$, we consider the equation
\begin{equation*}
F(X + a + b) -F (X + b) -F(X + a) + F (X) = 0.
\end{equation*}
If $ab = 0$, then $\nabla_F(a, b) = p^n$.
If $ab \neq 0$, then we can write the above equation as
\begin{equation}\label{e2}
(X + a + b)^{p^n-1} - (X + b)^{p^n-1} -(X + a)^{p^n-1} + X^{p^n-1} + 2abu= 0.
\end{equation}
Now we divide our analysis in the following five cases.
\noindent \\
\textbf{Case 1.} Let $X=0$ satisfies Equation~\eqref{e2}, then 
$$
(a + b)^{p^n-1} - b^{p^n-1} -a^{p^n-1} + 2abu= 0.
$$
For $(a,b) \in \F_{p^n} \times \F_{p^n}$, when $a=-b$, $X=0$ is a solution if $-1+abu=0$. When $a \neq -b$, $X=0$ is a solution if $-1+2abu=0.$
\noindent \\
\textbf{Case 2.} Let $X=-a$ satisfies Equation~\eqref{e2}, then 
$$
b^{p^n-1} - (-a+b)^{p^n-1} +a^{p^n-1} + 2abu= 0.
$$
For $(a,b) \in \F_{p^n} \times \F_{p^n}$, when $a=b$, $X=-a$ is a solution if $1+abu=0$. When $a \neq b$, $X=-a$ is a solution if $1+2abu=0.$
\noindent \\
\textbf{Case 3.} Let $X=-b$. Due to the symmetry in Case 2 and Case 3, we have $X=-b$ is a solution of Equation~\eqref{e2} if $a=b$ and $1+abu=0$ or $a \neq b$ and $1+2abu=0.$
\noindent \\
\textbf{Case 4.} Let $X=-(a+b)\neq0$. Then Equation~\eqref{e2} has a solution only if $-1+2abu=0.$ Note that this excludes $a=-b$, in which case we obtain Case 1.
\noindent \\
\textbf{Case 5.} Let $X\not \in \{0,-a,-b,-(a+b)\}$. Then Equation~\eqref{e2} has no solution in $\F_{p^n}$.
Summarizing the above cases, we can get the desired claim.
\end{proof}

Dobbertin et al.~\cite{DHKM} showed that the power function  $F(X)=X^{2\cdot3^{\frac{n-1}{2}}+1}$ over $\F_{3^n}$ for odd $n$ is differentially  $4$-uniform. We next compute its second-order zero differential spectrum.
\begin{thm}
Let $F(X)=X^d$, where $d=2\cdot3^{\frac{n-1}{2}}+1$, be a function over $\F_{3^n}$, for odd $n$. Then for $a,b \in \F_{3^n}$,
\begin{equation*}
\nabla_F(a,b)=
\begin{cases}
p^n &~\mbox{~if~} ab=0 \mbox{~and~} a=b,\\
3 &~\mbox{~if~} \left(\dfrac{-ab^{3^m}}{ab^{3^m}+a^{3^m}b}\right)^{1+3^m+3^{2m}+\cdots+3^{2m^2}} =1,\\
1 &~\mbox{~otherwise}.\\
\end{cases}
\end{equation*}
Moreover, $F$ is second-order zero differentially $3$-uniform.
\end{thm}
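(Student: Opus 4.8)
The plan is to reduce the count to a single linearized equation and then exploit the fact that the exponent $d=2\cdot 3^{m}+1$, with $m=\frac{n-1}{2}$, is odd. First I would dispose of the degenerate case $ab=0$, where the defining expression vanishes identically and $\nabla_F(a,b)=3^n$. For $ab\neq 0$ I would use the Frobenius factorization $F(X)=X\,(X^{3^m})^2$: writing $X'=X^{3^m}$, $a'=a^{3^m}$, $b'=b^{3^m}$ and expanding $\Delta_{a,b}F(X):=F(X+a+b)-F(X+a)-F(X+b)+F(X)$ term by term, all terms of total degree $\ge 2$ in $X$ and $X'$ cancel, leaving the $\F_3$-affine expression
\begin{equation*}
\Delta_{a,b}F(X)=2a'b'\,X+2(ab'+a'b)\,X^{3^m}+C,
\end{equation*}
where $C=C(a,b)$ is a constant; equivalently $\Delta_{a,b}F=L+C$ with $L$ an additive (linearized) map.

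Next I would invoke oddness. Since $d$ is odd, $F$ is an odd function, so by Remark~\ref{r2} the point $X_0=-\frac{a+b}{2}$ (which equals $a+b$ in characteristic $3$) is always a solution of $\Delta_{a,b}F(X)=0$. This is the step that makes the problem clean: it forces $C=-L(X_0)$, hence $\Delta_{a,b}F(X)=L(X-X_0)$ and the solution set is the coset $X_0+\ker L$. Therefore $\nabla_F(a,b)=|\ker L|$, and I never have to test solvability (the obstruction that would otherwise surface as the $\beta$-condition of Lemma~\ref{L1}).

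It then remains to compute $|\ker L|$. If $ab'+a'b=0$ then $L(X)=2a'b'X$ is injective and $|\ker L|=1$. Otherwise a nonzero $X\in\ker L$ satisfies $X^{3^m-1}=A$ with $A=-\dfrac{a'b'}{ab'+a'b}$, and since $\gcd(3^m-1,3^n-1)=3^{\gcd(m,n)}-1=2$ (because $\gcd(m,2m+1)=1$), this binomial has either $0$ or $2$ roots. Thus $|\ker L|=3$ exactly when $A$ is a $(3^m-1)$-th power, and $|\ker L|=1$ otherwise, yielding the values $3$ and $1$ in the statement.

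The last step, which I expect to demand the most care, is matching ``$A$ is a $(3^m-1)$-th power'' with the displayed character condition. Because the index $\gcd(3^m-1,3^n-1)=2$, the $(3^m-1)$-th powers are exactly the nonzero squares, so this is equivalent to $A^{(3^n-1)/2}=1$. I would then identify the exponent $E:=1+3^m+3^{2m}+\cdots+3^{2m^2}=\sum_{j=0}^{n-1}3^{mj}$ with $\frac{3^n-1}{2}$ modulo $3^n-1$: since $\gcd(m,n)=1$, the residues $mj\bmod n$ run over $0,\dots,n-1$, whence $E\equiv\sum_{i=0}^{n-1}3^{i}=\frac{3^n-1}{2}$. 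Finally I would replace $a'b'$ by $ab'$ in the base, as the two differ by the factor $a^{3^m-1}$ and $a^{(3^m-1)E}=a^{3^{mn}-1}=1$, giving $A^{E}=\left(\frac{-ab^{3^m}}{ab^{3^m}+a^{3^m}b}\right)^{E}$. This converts ``$|\ker L|=3$'' into precisely the stated condition, and maximizing over $a,b\neq 0$ shows $\nabla_F=3$, i.e. $F$ is second-order zero differentially $3$-uniform.
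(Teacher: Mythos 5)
Your proposal is correct, and while its skeleton coincides with the paper's (both reduce the second-order zero differential equation to an $\F_3$-affine equation with linear part involving only $X$ and $X^{3^m}$, and both use Remark~\ref{r2} --- $d$ odd, so $X_0=-\tfrac{a+b}{2}=a+b$ is always a solution --- to rule out the unsolvable case), your finishing step takes a genuinely different and more elementary route. The paper normalizes $Y=X/a$, $B=b/a$, special-cases $B=\pm 1$, and then invokes the Coulter--Henderson trinomial lemma (Lemma~\ref{L1}): the guaranteed solution excludes case (i) of that lemma, so one gets $3^{\gcd(m,n)}=3$ roots when $\alpha_{n-1}=1$ and one root otherwise, leaving the criterion in the opaque form $\alpha_{n-1}=1$. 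You instead use the affine structure directly --- the known solution gives $\Delta_{a,b}F(X)=L(X-X_0)$, so $\nabla_F(a,b)=\#\ker L$ --- and count the kernel by hand via $X^{3^m-1}=A$ and $\gcd(3^m-1,3^n-1)=3^{\gcd(m,2m+1)}-1=2$; this absorbs the paper's separate $a=\pm b$ cases uniformly (there $A=a^{3^m-1}$, always a square, giving $3$). Moreover you prove something the paper does not record: since $\gcd(m,n)=1$, the exponent $E=1+3^m+\cdots+3^{2m^2}$ is congruent to $\tfrac{3^n-1}{2}$ modulo $3^n-1$, so the stated condition is exactly $\chi\bigl(\tfrac{-ab^{3^m}}{ab^{3^m}+a^{3^m}b}\bigr)=1$, a transparent quadratic-character criterion; your base-change argument via $a^{(3^m-1)E}=a^{3^{mn}-1}=1$ is also correct. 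Two small glosses: first, your case $ab^{3^m}+a^{3^m}b=0$ is in fact vacuous, since it forces $(b/a)^{3^m-1}=-1$ while $-1$ is a nonsquare in $\F_{3^n}$ for odd $n$ and hence not a $(3^m-1)$-th power (harmless, as your answer $1$ lands in the ``otherwise'' bucket, where the displayed fraction is undefined anyway); second, to conclude $\nabla_F=3$ rather than $\nabla_F\le 3$ you should exhibit attainment, e.g.\ $a=b$ gives base $-1/2=1$ in characteristic $3$, hence three solutions.
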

\begin{proof}
 For $a,b \in \F_{3^n}$, consider the following equation   
\begin{equation*}
F(X + a + b) -F (X + b) -F(X + a) + F (X) = 0.
\end{equation*}
If $ab = 0$, then $\nabla_F(a, b) = p^n$.
If $ab \neq 0$, then we can write the above equation as
$$
(X + a + b)^{d} -(X + b)^{d} -(X + a)^{d} + X^{d} = 0,
$$
or equivalently,
$$
(Y + B + 1)^{2\cdot3^{m}+1} -(Y + B)^{2\cdot3^{m}+1} -(Y + 1)^{2\cdot3^{m}+1} + Y^{2\cdot3^{m}+1} = 0,
$$
where $m=\dfrac{n-1}{2}, Y=\dfrac{X}{a}$ and $B=\dfrac{b}{a}$. The above equation can be further reduced to
$$
 (B+B^{3^m})Y^{3^m}+B^{3^m}Y+B^{3^m+1}+B^{3^m}-B^{2\cdot3^m}-B=0,
$$
which is equivalent to
\begin{equation}~\label{e3q}
 (B+B^{3^m})Y^{3^m}+B^{3^m}Y-(B^{3^m}-B)(B^{3^m}-1)=0.
\end{equation}
If $B=\pm 1$ (i.e. $a=\pm b$), then the above Equation~\eqref{e3q} has three solutions, namely, $0,1$ and $2$. Let us now assume $a \neq \pm b$ and rewrite Equation~\eqref{e3q} as follows
\begin{equation}~\label{e31q}
 Y^{3^m}+\left(\frac{B^{3^m}}{B+B^{3^m}}\right)Y-\frac{(B^{3^m}-B)(B^{3^m}-1)}{B+B^{3^m}}=0.
\end{equation}
We know from Remark ~\ref{r2} that for every entry $(1,B) \in  \F_{p^n} \times \F_{p^n}$, there exists at least one solution $Y=-\dfrac{1+B}{2}$, as $d$ is odd. Hence, using Lemma~\ref{L1}, we can further characterize the solutions of Equation~\eqref{e31q}. In fact, if $\alpha_{n-1}=\left(\dfrac{-B^{3^m}}{B^{3^m}+B}\right)^{1+3^m+\cdots+3^{(n-1)m}}=1$, Equation~\eqref{e31q} has three solutions in $\F_{p^n}$, and otherwise one solution in $\F_{p^n}$.
\end{proof}

In the following theorem, we consider the second-order zero differential uniformity of an APN function introduced in~\cite{HRS99}.
\begin{thm}
Let $p$ be an odd prime and $n$ be a positive integer. Let the APN function $F(X) = X^d$ over $\F_{p^n}$, where $d=\frac{p^n+1}{4}+\frac{p^n-1}{2}$, if $p^n\equiv 3\pmod 8$, or $d=\frac{p^n+1}{4}$, if $p^n\equiv 7\pmod 8$. Then the second-order zero differential uniformity of $F$ is given by
\begin{itemize}
\item[$(1)$] if $p=3$, then $\nabla_F\leq 8$;
\item[$(2)$]  if $p>3$, then $\nabla_F\leq8$, if $\chi(2)=-1$, and $\nabla_F\leq18$, if $\chi(2)=1$.
\end{itemize}
\end{thm}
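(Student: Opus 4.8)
The plan is to reduce the second-order zero differential equation to a trinomial in the spirit of the previous two theorems, and then apply Lemma~\ref{L1} (the Coulter--Henderson result) to count roots. First I would set up the equation $F(X+a+b)-F(X+b)-F(X+a)+F(X)=0$ for $F(X)=X^d$, and handle the trivial case $ab=0$ separately, where $\nabla_F(a,b)=p^n$. For $ab\neq 0$, I would normalize by substituting $Y=X/a$ and $B=b/a$, so the equation depends only on the single parameter $B\in\F_{p^n}^*$. The key arithmetic fact to exploit is that $4d=p^n+1+2(p^n-1)$ or $4d=p^n+1$ depending on the congruence class of $p^n$ modulo $8$, so $X^{4d}$ simplifies dramatically via Frobenius: $X^{p^n}=X$ on $\F_{p^n}$ makes $d$ essentially a ``square-root-like'' exponent. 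Concretely, one expects $Y^{2d}=\pm Y$ (up to the quadratic character), so I would raise the normalized equation to a suitable power, or equivalently work with $Z=Y^d$, to linearize the high-degree terms into something of the form $Z^{p^k}-AZ-B'$.

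The second step is to carry out this linearization carefully, splitting into the two cases $p^n\equiv 3\pmod 8$ and $p^n\equiv 7\pmod 8$, since the exponent $d$ differs between them. In each case I expect the defining equation, after clearing denominators and applying $x^{p^n}=x$, to collapse into a $p^k$-linearized trinomial $Z^{p^k}-AZ-B'=0$ whose coefficients $A,B'$ are explicit rational expressions in $B$ and in $\chi$-values; here $\gcd(k,n)$ and the resulting $m=n/d$ of Lemma~\ref{L1} will control the root count. I would then invoke Lemma~\ref{L1}: case~(ii) contributes at most one root, while cases~(i) and~(iii) give either none or $p^d$ roots. Because $d=\gcd(k,n)$ is small (typically $1$ or $2$), the $p^d$-root case yields only a bounded contribution, and summing the possible numbers of preimages under the intermediate substitution $Z=Y^d$ produces the stated bounds $\nabla_F\le 8$ and $\nabla_F\le 18$.

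The dependence on $\chi(2)$ in part~(2) is the source of the two different bounds: after linearization, whether the auxiliary element $2$ (appearing through the coefficient $2ab$, exactly as in Equation~\eqref{e2}) is a square or a nonsquare determines how many solutions of $Z=Y^d$ lift back to genuine solutions $Y$, and hence whether the multiplicity is capped at $8$ or can reach $18$. For $p=3$ in part~(1), the term $2abu$-type contributions behave differently (as already seen in the $p=3$ anomalies of the previous two theorems), and I would treat $p=3$ on its own, where $\chi(2)$ is not the relevant invariant and the uniform bound $8$ emerges.

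I expect the main obstacle to be the bookkeeping in the lifting step $Z=Y^d$: translating a bound on the number of roots $Z$ of the trinomial into a bound on the number of $Y$ (and thus $X$) requires controlling, for each admissible $Z$, how many $d$-th-root preimages lie in $\F_{p^n}$, and this is exactly where the quadratic character $\chi(2)$ and the $\gcd$ structure interact. The parity and symmetry observations from Remark~\ref{r2} (that $Y=-(1+B)/2$ is forced or always present depending on the parity of $d$) should be used to pin down at least one guaranteed solution and to reduce the casework, but verifying that the various cases of Lemma~\ref{L1} never simultaneously overshoot the claimed bounds will be the delicate part. Since the statement is only an inequality rather than an exact spectrum, I would aim to bound each contribution crudely but safely rather than compute the exact spectrum, which keeps the casework tractable.
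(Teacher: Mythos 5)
Your proposal rests on a linearization that cannot happen. The exponent $d\approx\frac{p^n+1}{4}$ is not of Frobenius type: the only arithmetic simplification available is $2d\equiv\frac{p^n+1}{2}\pmod{p^n-1}$, which gives $\bigl((1+\alpha Y)^d\bigr)^2=\chi(1+\alpha Y)\,(1+\alpha Y)$ --- a ``fourth-root-like'' identity involving the quadratic character, not a $p$-power identity. Moreover, the equation involves four \emph{different} bases $(Y+B+1)^d,(Y+B)^d,(Y+1)^d,Y^d$, so a single substitution $Z=Y^d$ does not transform it into anything of the shape $Z^{p^k}-AZ-B'$; Lemma~\ref{L1} (Coulter--Henderson) simply has no purchase here, in contrast to the earlier theorems where the exponent genuinely was $2\cdot 3^m+1$ or a $2$-power gap. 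The paper's actual mechanism is: rule out $X\in\{0,-1,-a,-a-1\}$ using APN-ness, set $Y=1/X$ and $U_{\alpha Y}=(1+\alpha Y)^d$, square the relation $U_Y+U_{aY}=U_{(a+1)Y}+1$ (Equation~\eqref{eq:maj3}), and split into the $8$ cases of the sign vector $\left(\chi(1+Y),\chi(1+aY),\chi(1+(a+1)Y)\right)\in\{\pm1\}^3$. Each case, after a second squaring, yields an explicit quartic (or quadratic) in $Y$, hence at most $4$ (resp.\ $2$) solutions per case, and the double-root cases $4$ and $6$ are shown to contribute nothing by substituting the forced value $Y=\frac{-2}{a+1}$ back into the original equation and using $\gcd(d,p^n-1)=2$.

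Your explanation of where $\chi(2)$ enters is also not correct: it does not come from counting $d$-th-root preimages of $Z$, nor from a coefficient $2ab$ as in Equation~\eqref{e2}. In the paper it arises from a pruning step in which the case polynomials are viewed as polynomials in the variable $a$ and their discriminants are computed; these are, up to squares, $2Y^{15}(1+Y)^6$, $Y^{15}(1+Y)^3$, etc., and since $\chi(1+Y)$ is pinned down by the case hypotheses, the condition $\chi(2)=-1$ forces the discriminants in Cases $2$, $7$, $8$ to be squares, excluding those cases and leaving $4+4=8$ from Cases $3$ and $5$; without that exclusion one sums $4+4+4+2+4=18$. So the bounds $8$ and $18$ are case-counting totals over the sign-vector decomposition, a structure entirely absent from your plan. (Your appeal to Remark~\ref{r2} is also inapplicable: here $d$ is even, so $F$ is an even function and no solution is guaranteed; indeed the paper notes via Remark~\ref{r3} that $\nabla_F$ is even.) As written, your approach would stall at the first step, since no trinomial ever materializes.
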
 
Note that, under the conditions of our theorem, that is, $p^n\equiv 3,7\pmod 8$,  $\chi(2)=1$ if and only if $p^n\equiv 7\pmod 8$. 
\begin{proof}
The second-order zero differential uniformity equation for $F$ at $(a,b)$ (where $ab\neq 0$, or $a\neq -b$, since those cases are trivial) is
\[
(X+a+b)^d-(X+b)^d-(X+a)^d+X^d=0,
\]
which,  dividing by $b^d$ and using the relabelling $X/b\mapsto X, a/b\mapsto a$, is equivalent to
\begin{equation}
\label{eq:x}
(X+a+1)^d-(X+1)^d-(X+a)^d+X^d=0.
\end{equation}
If $X=0,-1,-a,-a-1$, then $a$ must satisfy (since $d$ is even) $(a+1)^d-a^d=1$, $(a-1)^d-a^d=1$, $(a-1)^d-a^d=1$, $(a+1)^d-a^d=1$, respectively.  All of these cases allow only two values of $a$, since $F$ is APN.  In fact, from the proof of the APN-ness of this function in \cite{HRS99}, there are no other solutions than $a=0$, which is excluded.
We thus cannot get the solutions $X=0,-1,-a,-a-1$. 

We now assume that $X\not\in\{0,-1,-a,-a-1\}$.
We next divide by $X^d$ and label $Y=1/X$. We thus get
\[
(1+(a+1)Y)^d-(1+Y)^d-(1+aY)^d+1=0.
\]
We set $U_{\alpha Y}=(1+\alpha Y)^d$, where $\alpha=1,a,a+1$. Observe that, as in \cite{HRS99}, $2d\equiv \frac{p^n+1}{2}\pmod {p^n-1}$, and so, $U_{\alpha Y}^2=\chi(1+\alpha Y)\, (1+\alpha Y)$. Squaring the second-order zero differential uniformity equation  $U_Y+U_{aY}=U_{(a+1)Y}+1$, we get
\begin{equation}
\label{eq:maj3}
\chi(1+Y) (1+Y)+\chi(1+aY) (1+a Y)+2 U_Y U_{aY}=\chi(1+(a+1)Y) (1+(a+1)Y) +1+2 U_{(a+1)Y}.
\end{equation}
We now consider $8$  (though we combine some of these) cases depending upon the sign vector 
\[
\left(\chi(1+Y), \chi(1+aY), \chi(1+(a+1)Y)\right)=(\alpha,\beta,\gamma)\in\{\pm 1\}^3.
\]

\noindent
{\bf Case 1.} Let $(\alpha,\beta,\gamma)=(1,1,1)$.
Equation~\eqref{eq:maj3} becomes 
\begin{align*}
&1+Y+1+aY+2 U_Y U_{aY}=1+(a+1)Y+1+2 U_{(a+1)Y}, \text {that is},\\
& U_Y U_{aY}=  U_{(a+1)Y}, \text{ which, by squaring again, renders}\\
& (1+Y)(1+aY)=1+(a+1)Y, \text{ implying (since $Y\neq 0$)}\\
&a=0, \text{ an impossibility}.
\end{align*}

\noindent
{\bf Case 2.} Let $(\alpha,\beta,\gamma)=(1,1,-1)$.
Equation~\eqref{eq:maj3} becomes 
\begin{align*}
&1+Y+1+aY+2 U_Y U_{aY}=-(1+(a+1)Y)+1+2 U_{(a+1)Y}, \text {that is},\\
&  U_Y U_{aY}=  -1-(a+1)Y +U_{(a+1)Y}, \text{ which, by squaring as before, renders}\\
& (1+Y)(1+aY)=(1+(a+1)Y)^2-(1+(a+1)Y)-2(1+(a+1)Y)U_{(a+1)Y},\\
& (a^2+a+1) Y^2-1=2(1+(a+1)Y)U_{(a+1)Y}, \text{ and squaring again}\\
& (a^2+a+1)^2 Y^4-2 (a^2+a+1) Y^2+1=-4 (1+(a+1)Y)^2 (1+(a+1)Y),\\
&5 + 12 (1 + a) Y + 2 (5 + 11 a + 5 a^2) Y^2 +  4 (1 + a)^3 Y^3 + (1 + a + a^2)^2 Y^4=0,
\end{align*}
which has at most four solutions. The discriminant of this equation is $-2^8 (-1 + a)^4 a^3 (8 + 11 a + 8 a^2)$. Recall that for any polynomial of any degree $m$, a necessary condition for the polynomial to be irreducible over $\F_{p^n}$ is to have its discriminant a square ($m$ odd), respectively, non-square ($m$ even) (see~\cite{Dickson06}). Thus, we have  no solutions if $-a(8 +11 a + 8 a^2)$ is a non-square.

\noindent
{\bf Case 3.} Let $(\alpha,\beta,\gamma)=(1,-1,1)$.
Equation~\eqref{eq:maj3} becomes 
\begin{align*}
&1+Y-(1+aY)+2 U_Y U_{aY}=1+(a+1)Y+1+2 U_{(a+1)Y},\\
&U_Y U_{aY}=U_{(a+1)Y}+1+aY,\text{ and by squaring}, \\
&-(1+Y)(1+aY)=1+(a+1)Y+1+a^2Y^2+2aY+2(1+aY)U_{(a+1)Y},\\
&-3 -2 (1 + 2a) Y - (a+a^2) Y^2= 2(1+aY)U_{(a+1)Y},\\
&5 + 4 (2 + 3 a) Y + 2 (1 + a) (2 + 5 a) Y^2 + 4 a (1 + a)^2 Y^3 +  a^2 (1 + a)^2 Y^4=0,
\end{align*}
which again has at most four solutions. The discriminant of this equation is $2^8 a^3 (1 + a)^4 (8 - 11 a + 8 a^2)$, so we have  no solutions if $a(8 - 11 a + 8 a^2)$ is a non-square.

\noindent
{\bf Case 4.} Let $(\alpha,\beta,\gamma)=(1,-1,-1)$.
In this case, we have the sequence of implications
\begin{align*}
&1+Y-(1+aY)+2 U_Y U_{aY}=-(1+(a+1)Y)+1+2 U_{(a+1)Y},\\
& U_Y U_{aY}=U_{(a+1)Y}-Y,\\
&-(1+Y)(1+a Y)=-(1+(a+1)Y)+Y^2-2Y U_{(a+1)Y}\\
&(1 + a) Y^2=2Y U_{(a+1)Y}, \text{ and  dividing by $Y$ and squaring we get}\\
&(1 + a)^2 Y^2=-4 (1+(a+1)Y),\\
& (2 + (1+a)Y)^2=0, \text{ of solution } Y=\frac{-2}{a+1}.
\end{align*}

\noindent
{\bf Case 5.} Let $(\alpha,\beta,\gamma)=(-1,1,1)$.
As before, we get
\allowdisplaybreaks
\begin{align*}
&-(1+Y)+(1+aY)+2 U_Y U_{aY}=(1+(a+1)Y)+1+2 U_{(a+1)Y},\\
&U_Y U_{aY}=U_{(a+1)Y}+Y+1,\\
&-(1+Y)(1+a Y)=(1+(a+1) Y)+(Y+1)^2+2 (Y+1) U_{(a+1)Y},\\
& -3 - 2 (2 + a) Y - (1 + a) Y^2=2 (Y+1) U_{(a+1)Y},\text{ and so},\\
& 5 + 4 (3 + 2 a) Y + 2 (1 + a) (5 + 2 a) Y^2 +  4 (1 + a)^2 Y^3 + (1 + a)^2 Y^4=0,
\end{align*}
with at most four solutions. The discriminant of this equation is $2^8 a^3 (1 + a)^4 (8 - 11 a + 8 a^2)$, so we have  no solutions if $a(8 - 11 a + 8 a^2)$ is a non-square. 

\noindent
{\bf Case 6.} Let $(\alpha,\beta,\gamma)=(-1,1,-1)$.
We have
\allowdisplaybreaks
\begin{align*}
&-(1+Y)+1+a Y+2 U_Y U_{aY}=-(1+(a+1) Y)+1 +2 U_{(a+1)Y},\\
 &U_Y U_{aY}=U_{(a+1)Y}- a Y,\\
 &-(1+Y)(1+a Y)=-(1+(a+1) Y)+a^2 Y^2-2 a Y U_{(a+1)Y}\\
 &2 a Y U_{(a+1)Y} = a (1 + a) Y^2, \text{ and dividing by $aY$ and squaring},\\
 & -4(1+(a+1)Y)=(1+a)^2 Y^2, \text{ that is},\\
 & (2 + Y + a Y)^2, \text{ of solution } Y=\frac{-2}{a+1}.
\end{align*}

\noindent
{\bf Case 7.} Let $(\alpha,\beta,\gamma)=(-1,-1,1)$.
As before, we look at the equation
\allowdisplaybreaks
\begin{align*}
&-(1+Y)-(1+a Y)+2 U_Y U_{aY}=(1+(a+1) Y)+1 +2 U_{(a+1)Y},\\
&   U_Y U_{aY}=2+  (1 + a) Y+U_{(a+1)Y},\\
& (1+Y)(1+a Y)= (2+  (1 + a) Y)^2+(1+(a+1)Y)+2(2+  (1 + a) Y)U_{(a+1)Y},\\
& -4 - 4 (1 + a) Y + (-1 - a - a^2) Y^2=2(2+  (1 + a) Y)U_{(a+1)Y},\\
& (-4 - 4 (1 + a) Y + (-1 - a - a^2) Y^2)^2=4(2+  (1 + a) Y)^2(1+(a+1)Y),\text{ and so},\\
&4(1 +  a^2) + 4(1 +  a)(1+a^2)   Y + (1 + a + a^2)^2 Y^2=0,
\end{align*}
which has at most two solutions.

The discriminant of this equation is $-2^4 a^2 (1 + a^2)$. Thus, we have no solutions if $\chi(-(a^2+1))=-\chi(a^2+1)=-1$, that is, $a^2+1$ is a square and two solutions, otherwise.

\noindent
{\bf Case 8.} Let $(\alpha,\beta,\gamma)=(-1,-1,-1)$.
We finally have
\allowdisplaybreaks
\begin{align*}
&-(1+Y)-(1+a Y)+2 U_Y U_{aY}=-(1+(a+1) Y)+1 +2 U_{(a+1)Y},\\
&U_Y U_{aY}=U_{(a+1)Y}+1=U_Y +U_{aY}, \text{ and dividing by $U_Y$}\\
&U_{aY}-1= \left(\frac{1+aY}{1+Y} \right)^d, \text{ and by squaring}\\
&-(1+aY) +1-2 U_{aY}=\frac{1+aY}{1+Y}, \text{ that is},\\
& -2 U_{aY}= \frac{a Y^2+2 a Y+1}{Y+1},\\
&5 + 8 (1 + a) Y + 2 (2 + a) (1 + 2 a) Y^2 + 4 a (1 + a) Y^3 + a^2 Y^4=0,
\end{align*}
 which has at most four solutions. The  discriminant of this equation is $ -2^8 (-1 + a)^4 a^3 (8 + 11 a + 8 a^2)$, so we have no solutions if and only if $-a(8 + 11 a + 8 a^2)$ is a non-square.
 
 Recall that $\chi(-1)=-1$ under our conditions on $n,p$. 
 Surely, Cases 4 and 6  cannot occur simultaneously, and we can see that the conditions on $Y$ in Case 4 (as well, as Case 6) are fulfilled for $a\neq\pm1$, so one of these cases can happen, but not both.

 Note also that the conditions for Cases 2 and 8 are identical, as are the conditions for Cases 3 and 5. 

In Cases 4 and 6, the solution is $ Y=\frac{-2}{a+1}$. Then:
\begin{align*}
&\chi(1+(a+1)Y)=\chi(-1)=-1.\\
&\chi(1+Y)=\chi\left(\frac{a-1}{a+1}\right)=\chi(a-1)\chi(a+1).\\
&\chi(1+aY)=\chi\left(\frac{-a+1}{a+1}\right)=\chi(-1)\chi(a-1)\chi(a+1)=-\chi(a-1)\chi(a+1)=-\chi(1+Y).\\
\end{align*}
For any $a\neq\pm1,0$, then, we have that either $\chi(a-1)=\chi(a+1)$ (Case 4) or $\chi(a-1)=-\chi(a+1)$ (Case 6), so $ Y=\frac{-2}{a+1}$ is always a solution 
 of Equation (\ref{eq:maj3}). However, by replacing $X=\frac{1}{Y}=-\frac{a+1}{2}$ in the original equation $(X+a+1)^d-(X+1)^d-(X+a)^d+X^d=0$,
 we can see that the latter equation is not satisfied by this value of $X$:
\begin{align*}
&\left(-\frac{a+1}{2}+a+1\right)^d-\left(-\frac{a+1}{2}+1\right)^d-\left(-\frac{a+1}{2}+a\right)^d+\left(-\frac{a+1}{2}\right)^d=0,\\
&2^{-d+1}((a+1)^d-(a-1)^d)=0,\text{ that is},\\
&\left(\frac{a-1}{a+1}\right)^d=1.
\end{align*}
The number of solutions to the equation $z^d=1$  is given by $\gcd(d,p^n-1)=2$. It is easy to see that these solutions are $z=\pm1$. However, $z=1$ renders a contradiction, while $z=-1$ renders  $a=0$, which is excluded. This implies that, for $a\neq\pm1$,  Cases 4 and 6 do not contribute to the total number of solutions.
It is easy to see that $a=-1$ renders a contradiction on Cases 4 and 6, while, trivially, the solution $a=\pm1$ cannot fulfill the conditions on these cases.

Therefore, for any $a\in\F_{p^n}^*$, Cases 4 and 6 do not contribute to the total number of solutions of the original equation (\ref{eq:x}). 

 We can prune the other cases a bit better in the following way. We express the polynomials in the ``variable'' $a$, and compute their discriminant. If $a$ is fixed in $\F_{p^n}$ and $Y_i$ exists for the Case~$i$ ($i=2,3,5,7,8$), then the corresponding discriminant of the  polynomial in $a$ cannot be a square. These discriminants for Cases $2,3,5,7,8$ are:
 \[
 2 Y_2^{15} (1 + Y_2)^6, Y_3^{15} (1 + Y_3)^3, Y_5^4 (1 + Y_5)^3 (2 + Y_5)^2,  2 Y_7^2 (1 + Y_7)^3, 2 Y_8^2 (1 + Y_8)^3.
 \]
 Since $\chi(Y_i+1)=1$ in Cases 2 and 3, and  $\chi(Y_i+1)=-1$ in the other cases, we see that, if $\chi(2)=-1$, we can exclude Cases 7 and 8, since the discriminant is therefore a square, and, since the discriminant (with variable $X$) for Case 2 is the same as for Case 8, we can also exclude this case; we thus obtain if $\chi(2)=-1$ an upper bound of 8.  If $\chi(2)=1$, then we obtain an upper bound of 18. 

 If $p=3$, then we can further refine the  distribution of the potential solutions. 
In this case,    since $\chi(2)=-1$, we can exclude Cases 2, 7 and 8. Next, $8a^2-11a+8=2(a^2-a+1)=2(1 + a)^2$, which is a square only if $a=-1$.
 
 First, if $p=3$, $a= -1$, then  
 in Case 3 the equation reduces to $2-Y=0$, which renders as the only solution $Y=2$.  In Case 5 the equation reduces to $2+Y=0$, which renders as the  only solution $Y=-2$. 
 Therefore, when $a=-1$ there are exactly two solutions.
 
 Next, let $p=3$, $a\not=- 1$.  
The maximum number of solutions can only happen for Cases 3 and 5 (when $\chi(a)=-1$).  
 When $\chi(a)=1$, there are no solutions.
\end{proof}
 \begin{rmk}
It is worth noting that, since $F$ is an even function and $X=\frac{-a-1}{2}$ is not a solution, by Remark~\textup{\ref{r3}} the  second-order zero differential uniformity of $F$ is even.\end{rmk} 

We display in Table~\ref{Table2} some experimental data on the monomial function of our previous theorem, which shows that our bound of $8$ (when $\chi(2)=-1$) for the second-order zero differential uniformity is attained for some values of $p$ and $n$.
\begin{table}[hbt]
\caption{Experimental results on $X^d$}
\label{Table2}
\begin{center}
\begin{tabular}{|c|c|c|c|} 
 \hline
 $p$ & $n$ & $d$ & $\nabla_F$\\
 \hline
 $3$ &$3$ &20 & $2$\\
 \hline
 $3$ &$5$ & 182&  $4$\\
 \hline
  $3$ &$7$ & 1640&  $8$\\
 \hline
  $3$ &$9$ &14762&$6$\\
 \hline
 $7$ &$3$ & 86&  $4$\\
 \hline
 $7$ &$5$ &4202&  $8$\\
 \hline
$11$ &$3$&998 &  $6$\\
 \hline
\end{tabular}
\end{center}
\end{table}

\section{The second-order zero differential spectrum of 0-APN functions and extensions}
\label{S4}

Budaghyan et al.~\cite{Bkrs} considered the function $F(X)=X^{21}$ and showed that it is $0$-APN if and only if $n$ is not a multiple of $6$ (that is, this function is an almost exceptional partial APN). As a side note, we  point out that A. Pott~\cite{Loops19} showed (non-constructively) that there exist partially APN permutations (in even characteristic), for all  integers~$n\geq 3$, by connecting the problem to Steiner triple systems. 
We discuss here the second-order zero differential spectrum of $F(X)=X^{21}$ and show that it attains the best possible value of FBCT for a non-APN function when $n$ is odd.

\begin{thm}\label{T1}
 Let $F(X)=X^{21}$ be a function over $\F_{2^n}$, where $n$ is odd. Then for $a,b \in \F_{2^n}$,
\begin{equation*}
\nabla_F(a,b)=
\begin{cases}
2^n &~\mbox{~if~} ab=0 \mbox{~or~} a=b,\\
4 &~\mbox{~if~} ab \neq0 \mbox{~and~} a\neq b, \displaystyle \sum_{i=0}^{n-1} v^{s_i}C^{2^{2i}}=0\\
&~\mbox{~and~} \displaystyle \sum_{r=0}^{n-1}\left(\sum_{i=0}^{n-1} (i+1) v^{s_i} C^{2^{2i}}\right)^{2^{2r}}=0,\\
0 &~\mbox{~otherwise,~}\\
\end{cases}
\end{equation*}
where $v=\dfrac{b}{a}\left(1+\dfrac{b^3}{a^3}\right)$, $C=1+\dfrac{b^{20}+a^4b^{16}+a^{15}b^5+a^3b^{17}}{a^{16}b^4+a^{19}b}$, $\displaystyle s_{i}= \sum_{j=i}^{n-2} 2^{2(j+1)}$ for $0\leq i\leq n-2$ and $s_{n-1}=0$. Moreover, the Feistel boomerang uniformity of $F$ is~$4$.

\end{thm}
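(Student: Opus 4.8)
The plan is to reduce the defining equation to a single additive (linearized) polynomial, collapse it to a genuine trinomial, and then invoke Lemma~\ref{L1}. First I dispose of the degenerate entries. If $a=0$ or $b=0$ the expression $F(X+a+b)+F(X+b)+F(X+a)+F(X)$ is identically $0$, and if $a=b\neq 0$ then $a+b=0$, so $F(X+a+b)=F(X)$ cancels the last term while $F(X+b)=F(X+a)$ cancels the two middle ones; hence $\nabla_F(a,b)=2^n$. For $ab\neq0,\ a\neq b$, I expand via $21=16+4+1$ and $(X+c)^{2^j}=X^{2^j}+c^{2^j}$. Since $21$ has binary weight $3$, every term of additive degree $\geq 2$ cancels, leaving the additive equation
\[
ab(a^3+b^3)X^{16}+ab(a^{15}+b^{15})X^4+a^4b^4(a^{12}+b^{12})X+B=0,
\]
with $B$ an explicit constant. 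Dividing by the leading coefficient and substituting $Y=X/a$, $t=b/a$ yields $Y^{16}+PY^4+v^3Y+C=0$, where $v=t+t^4=\tfrac{b}{a}(1+\tfrac{b^3}{a^3})$, the constant is exactly the $C$ in the statement, and $P=1+t^3+t^6+t^9+t^{12}$.

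The crucial observation is the identity $1+P=v^3$ (both sides equal $t^3+t^6+t^9+t^{12}$), which rewrites the left-hand side as $(Y^4+Y)^4+v^3(Y^4+Y)+C$. Setting $W=Y^4+Y$ collapses the additive-degree-$2$ equation into the genuine trinomial $W^4+v^3W+C=0$. I then apply Lemma~\ref{L1} with $p=2$, $k=2$, $A=v^3$, $B=C$ (so $d=\gcd(2,n)=1$, $m=n$, and the $t_i$ there are the $s_i$ of the statement). Because $(2^{2n}-1)/3$ is an integer and $v^{2^{2n}-1}=1$, one always has $\alpha_{n-1}=1$, so only cases (i) and (iii) can occur; the trinomial is solvable precisely when $\beta_{n-1}=\sum_{i}(v^3)^{s_i}C^{2^{2i}}=0$, the first displayed condition, and then it has exactly $p^{d}=2$ roots $W_1,\,W_1+v$ (here $n$ odd forces the unique cube root of $v^3$ to be $v$).

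Finally I lift back through $W=Y^4+Y$. For odd $n$ the map $Y\mapsto Y^4+Y$ has kernel $\F_2$ and image $\{z:\Tr(z)=0\}$, so $Y^4+Y=W_i$ has a solution (then exactly two) iff $\Tr(W_i)=0$. Since $\Tr(v)=\Tr(t+t^4)=0$, the two roots $W_1$ and $W_1+v$ have equal trace, so either both lift or neither does; this forces $\nabla_F(a,b)\in\{0,4\}$. Using the explicit root of Lemma~\ref{L1}(iii) with $c=1$ (so $\sum_{j\le i}c^{2^{2j}}=i+1$ and $\Tr(1)=1$), one gets $W_1=\sum_i(i+1)(v^3)^{s_i}C^{2^{2i}}$, and $\Tr(W_1)=\sum_{r}W_1^{2^{2r}}=0$ is exactly the second displayed condition (using that $r\mapsto 2r$ permutes residues mod the odd $n$). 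For the boomerang-uniformity claim, all nontrivial entries are $\leq 4$, and since $X^{21}$ is not APN for odd $n$, Lemma~\ref{L00} gives a nonzero nontrivial entry, which must then equal $4$; hence $\beta_F=4$.

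The main obstacle I anticipate is spotting the collapse $1+P=v^3$, i.e.\ recognizing the additive-degree-$2$ polynomial as $(Y^4+Y)^4+v^3(Y^4+Y)$; without it one is left with a $\phi$-quadratic (in $\phi=(\cdot)^4$) that does not reduce to a Coulter--Henderson trinomial. The second delicate point is the two-layer solvability: the count is the product of ``the trinomial has roots'' (condition one) and ``those roots have zero trace, hence lift'' (condition two), and it is precisely the relation $\Tr(v)=0$—making the two trinomial roots behave identically—that pins the value at exactly $4$ rather than $2$.
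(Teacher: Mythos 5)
Your proof is correct and takes essentially the same route as the paper's: the same expansion to the linearized equation, the same collapse via $Z=Y^4+Y$ (the paper writes this as substituting $u=v+1$, your identity $1+P=v^3$) to a Coulter--Henderson trinomial handled by Lemma~\ref{L1}, and the same trace argument ($\Tr(B+B^4)=0$, with $i\mapsto 2i$ a permutation mod odd $n$) showing the two trinomial roots lift together, forcing $\nabla_F(a,b)\in\{0,4\}$. Your two refinements---writing the trinomial coefficient as $v^3$, which correctly reconciles the statement's $v=B(1+B^3)$ with the proof's coefficient (the paper's statement has a notational slip here), and deducing attainment of the value $4$ from non-APN-ness together with Lemma~\ref{L00}, a step the paper leaves implicit---only tighten the same argument.
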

\begin{proof}
 We consider the following equation for $a,b \in \F_{2^n}$,   
\begin{equation*}
F(X + a + b) +F (X + b) +F(X + a) + F (X) = 0.
\end{equation*}
If $ab = 0$ and $a=b$, then $\nabla_F(a, b) = 2^n$.
If $ab \neq 0$ and $a \neq b$, then we can write the above equation as
$$
(X + a + b)^{21} +(X + b)^{21} +(X + a)^{21} + X^{21} = 0,
$$
or equivalently,
$$
(Y + B + 1)^{21} +(Y + B)^{21} +(Y + 1)^{21} + Y^{21} = 0,
$$
where $Y=\dfrac{X}{a}$ and $B=\dfrac{b}{a}$ . The above equation can be further reduced to
\begin{align*}
 & (B^4+B)Y^{16}+(B^{16}+B)Y^{4}+(B^4+B^{16})Y+B^4+B^{17}+B^5+B^{16}+B^{20}+B=0.
\end{align*}
Notice that $B^4+B \neq 0$, because else we would have $a^3=b^3$ and as $n$ is odd, we get $\gcd(3,2^n-1)=1$ which would further imply $a=b$. Hence, dividing the above equation by $B^4+B$, we have 
\begin{equation}\label{ee}
  Y^{16}+\left(\frac{B^{16}+B}{B^4+B}\right)Y^{4}+\left(\frac{B^{16}+B^4}{B^4+B}\right)Y + \frac{B^4+B^{17}+B^5+B^{16}+B^{20}+B}{B^4+B}=0.
\end{equation}
One can further reduce the coefficients in the above equation as
\begin{align*}
 & \frac{B^{16}+B}{B^4+B} = 1+B^{12}+B^3+B^9+B^6 =:u , \\
 & \frac{B^{16}+B^4}{B^4+B} =B^{12}+B^3+B^9+B^6=u+1=:v.
\end{align*}
Thus, we can rewrite Equation~\eqref{ee} as $Y^{16}+uY^4+vY+ C=0,$ where
$$C= \dfrac{B^4+B^{17}+B^5+B^{16}+B^{20}+B}{B^4+B}.$$ 
Substituting $u=v+1$ in the above equation, we get $Y^{16}+Y^4+vY^4+vY+ C=0,$ which can be further simplified as $Z^4+vZ+C=0$ for $Z=Y^4+Y$. Now using Lemma~\ref{L1}, to solve $Z^4+vZ+C=0$, we have $k=2$, $m=n$ and $\alpha_{m-1}=v^{\frac{2^{2n}-1}{3}}=(B(1+B^3))^{2^{2n}-1}=1$. This gives us that $Z^4+vZ+C=0$ has no solution in $\F_{2^n}$ if $\sum_{i=0}^{n-1} v^{s_i}C^{2^{2i}} \neq 0$, where $s_{n-1}=0$, $s_{i}= \displaystyle \sum_{j=i}^{n-2} 2^{2(j+1)}$ for $0 \leq i \leq n-2$, and two solutions in $\F_{2^n}$ if the sum  $\sum_{i=0}^{n-1} v^{s_i}C^{2^{2i}}=0$, in which case, taking $c=1$, and given that $\tau^3=v$, we get $\tau=B(1+B^3)$. Then the two solutions of $Z^4+vZ+C=0$ from Lemma~\ref{L1} are $ Z_1=\sum_{i=0}^{n-1} (i+1) v^{t_i} C^{2^{2i}}$ and  $Z_2 = \sum_{i=0}^{n-1} (i+1) v^{t_i} C^{2^{2i}} + \tau$, where 
$\displaystyle t_i=\sum_{j=i}^{n-2} 2^{2(j+1)}$ for $0\leq i \leq n-1$. Observe that $t_i=s_i$ (for $r=n-1$) in our case. 

We are now left with the equation $Y^4+Y+Z_j=0$ for $j=1,2$. Again using Lemma~\ref{L1}, the equation can have either zero or two solutions depending on whether the sum $\sum_{i=0}^{n-1} (Z_j)^{2^{2i}}$ is zero or not. Notice that $\sum_{i=0}^{n-1} (Z_1)^{2^{2i}}=\sum_{i=0}^{n-1} (Z_2)^{2^{2i}}$, as the sum $\sum_{i=0}^{n-1} \tau^{2^{2i}}=0$. Thus, we have either four solutions or no solutions. This will give us our desired claim.
\end{proof}

\begin{rmk} Note that the equivalent proof for $n$ even gives 
\begin{equation*}
\nabla_F(a,b)=
\begin{cases}
2^n &~\mbox{~if~} ab=0 \mbox{~or~} a=b,\\
16 &~\mbox{~if~} ab \neq0 \mbox{~and~} a^3\neq b^3, \displaystyle \sum_{i=0}^{n/2-1} v^{s_i}C^{2^{2i}}=0, \\
&~\mbox{~and~} \displaystyle  \sum_{r=0}^{n/2-1}\left(\sum_{i=0}^{n-1} (i+1) v^{s_i} C^{2^{2i}}\right)^{2^{2r}}=0,\\
0 &~\mbox{~otherwise,~}
\end{cases}
\end{equation*}
where we have $v=\dfrac{b}{a}\left(1+\dfrac{b^3}{a^3}\right)$, $C=1+\dfrac{b^{20}+a^4b^{16}+a^{15}b^5+a^3b^{17}}{a^{16}b^4+a^{19}b}$, $s_{i}= \sum_{j=i}^{n/2-2} 2^{2(j+1)}$ for $0\leq i\leq n/2-2$ and $s_{n/2-1}=0$.
\end{rmk}

We can generalize the previous function to any cubic polynomial, but, of course, the spectrum is not going to be that explicit. For a linearized polynomial $L$, we let $\ker(L),\Im(L)$ be the kernel, respectively, the image of~$L$.
\begin{thm}
\label{cubic}
Let $F(X)=\sum_{0<i<j<n} c_{ij} X^{p^i+p^j+1}$ be a cubic function in $F_{p^n}[X]$, where $p$ is an arbitrary prime. Then, for any $a,b\in\F_{p^n}$, the second-order zero differential spectra is given by
\begin{equation*}
\nabla_F(a,b)=
\begin{cases}
p^n &~\mbox{~if~} ab=0,\\
p^{\dim (\ker L_{a,b})} &~\mbox{~if~} \delta_{a,b}\in\Im(L_{a,b}),\\ 
0 &~\mbox{~if~} \delta_{a,b}\not\in\Im(L_{a,b}),
\end{cases}
\end{equation*}
where $  \displaystyle 
L_{a,b}(X)=\sum_{0<i<j<n} c_{ij} \left( X
   \left(a^{p^i} b^{p^j}+a^{p^j} b^{p^i}\right)    +X^{p^i} \left(b a^{p^j}+a
   b^{p^j}\right)+X^{p^j} \left(b a^{p^i}+a b^{p^i}\right)\right)$, and
   $\delta_{a,b}=\sum_{0<i<j<n} c_{ij} \left((a+b)\left( a^{p^i} b^{p^j}+a^{p^j} b^{p^i}\right)  
   +a b^{p^i+p^j}+b a^{p^i+p^j}\right)$.
\end{thm}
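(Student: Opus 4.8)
The plan is to exploit the fact that the second-order difference of a cubic polynomial is an affine map, and then to count solutions by linear algebra. Write $\Delta_{a,b}F(X):=F(X+a+b)-F(X+b)-F(X+a)+F(X)$, so that $\nabla_F(a,b)=\#\{X:\Delta_{a,b}F(X)=0\}$. The operator $\Delta_{a,b}$ is additive in $F$, so I may treat one monomial $M(X)=X^{p^i+p^j+1}$ at a time and restore the coefficients $c_{ij}$ and the sum over $0<i<j<n$ only at the end. The case $ab=0$ is immediate: if $a=0$ (and symmetrically if $b=0$) every term cancels and $\Delta_{a,b}F\equiv 0$, so $\nabla_F(a,b)=p^n$. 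Assume henceforth $ab\neq 0$.

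First I would record that each monomial factors as a product $M(X)=\ell_1(X)\ell_2(X)\ell_3(X)$ of three additive (linearized) maps $\ell_1(X)=X^{p^i}$, $\ell_2(X)=X^{p^j}$, $\ell_3(X)=X$, each satisfying $\ell_k(X+c)=\ell_k(X)+\ell_k(c)$. Writing $\Delta_{a,b}M=D_b(D_aM)$ with $D_cG(X):=G(X+c)-G(X)$ and expanding by the product rule, the first difference $D_aM(X)$ splits into a part quadratic in $X$, a part linear in $X$, and the constant $\ell_1(a)\ell_2(a)\ell_3(a)$. Applying $D_b$ annihilates the constant, turns the linear part into a constant, and turns the quadratic part into the $X$-linear body. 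Collecting the $X$-dependent terms gives
\begin{align*}
L^{(M)}_{a,b}(X)=\ell_1(a)\bigl(\ell_2(b)\ell_3(X)+\ell_2(X)\ell_3(b)\bigr)
&+\ell_2(a)\bigl(\ell_1(b)\ell_3(X)+\ell_1(X)\ell_3(b)\bigr)\\
&+\ell_3(a)\bigl(\ell_1(b)\ell_2(X)+\ell_1(X)\ell_2(b)\bigr),
\end{align*}
and collecting the constant terms gives $\delta^{(M)}_{a,b}=\ell_1(a)\ell_2(a)\ell_3(b)+\ell_1(a)\ell_2(b)\ell_3(a)+\ell_1(b)\ell_2(a)\ell_3(a)+\ell_1(a)\ell_2(b)\ell_3(b)+\ell_2(a)\ell_1(b)\ell_3(b)+\ell_3(a)\ell_1(b)\ell_2(b)$. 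Substituting $\ell_1(X)=X^{p^i},\ell_2(X)=X^{p^j},\ell_3(X)=X$, grouping $L^{(M)}_{a,b}$ by the Frobenius power of $X$, and summing $c_{ij}$ over $0<i<j<n$ then reproduces verbatim the stated $L_{a,b}$ and $\delta_{a,b}$; this identification is the crux and I would carry it out term by term.

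Having done so, the defining equation becomes the single affine-linear equation $L_{a,b}(X)=-\delta_{a,b}$, where $L_{a,b}$ is an $\F_p$-linear (linearized) endomorphism of $\F_{p^n}$. By rank--nullity over $\F_p$, its solution set is either empty, when $-\delta_{a,b}\notin\Im(L_{a,b})$, or a coset of $\ker(L_{a,b})$ of size $p^{\dim(\ker L_{a,b})}$, when $-\delta_{a,b}\in\Im(L_{a,b})$. Since $\Im(L_{a,b})$ is an $\F_p$-subspace, $-\delta_{a,b}\in\Im(L_{a,b})$ if and only if $\delta_{a,b}\in\Im(L_{a,b})$, so the sign picked up in moving $\delta_{a,b}$ across is immaterial to the count. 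This yields exactly the three-way split claimed.

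The only genuine work is the bookkeeping in the second paragraph: the six cross-terms of the double difference must be collected without error, and one must verify that grouping $L^{(M)}_{a,b}$ by $X,X^{p^i},X^{p^j}$ produces the coefficients $a^{p^i}b^{p^j}+a^{p^j}b^{p^i}$, $ba^{p^j}+ab^{p^j}$, $ba^{p^i}+ab^{p^i}$, respectively, and that the constant collapses to $(a+b)(a^{p^i}b^{p^j}+a^{p^j}b^{p^i})+ab^{p^i+p^j}+ba^{p^i+p^j}$. These steps are routine but error-prone; everything after is standard linear algebra. As a sanity check I would note that specializing to $p=2$, $i=2$, $j=4$ recovers the linearized equation solved for $X^{21}$ in Theorem~\ref{T1}.
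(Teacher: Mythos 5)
Your proposal is correct and follows essentially the same route as the paper: both reduce $F(X+a+b)-F(X+b)-F(X+a)+F(X)=0$ to the affine-linear equation $L_{a,b}(X)=-\delta_{a,b}$ and count solutions as a coset of $\ker L_{a,b}$ (the paper even notes, as you do, that the expansion is a straightforward but tedious computation, which it performs by directly expanding the four shifted values rather than via your double-difference product rule on $\ell_1\ell_2\ell_3$). Your collected coefficients and constant term match the paper's $L_{a,b}$ and $\delta_{a,b}$ exactly, and your observation that $-\delta_{a,b}\in\Im(L_{a,b})$ iff $\delta_{a,b}\in\Im(L_{a,b})$ correctly disposes of the sign.
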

\begin{proof}
We first compute
\allowdisplaybreaks
\begin{align*}
F(X)&=\sum_{0<i<j<n} c_{ij} X^{p^i+p^j+1}\\
F(X+a)&=\sum_{0<i<j<n} c_{ij} \left(a^{p^i+1} X^{p^j}+a^{p^j+1} X^{p^i}+X
   a^{p^i+p^j}+a^{p^i+p^j+1}\right.\\
   &\qquad \quad \left.+a^{p^i} X^{p^j+1}+a^{p^j} X^{p^i+1}+a
   X^{p^i+p^j}+X^{p^i+p^j+1} \right)\\
F(X+b)&= \sum_{0<i<j<n} c_{ij} \left(b^{p^i+1} X^{p^j}+b^{p^j+1} X^{p^i}+X
   b^{p^i+p^j}+b^{p^i+p^j+1}\right.\\
   &\qquad \quad \left.+b^{p^i} X^{p^j+1}+b^{p^j} X^{p^i+1}+b
   X^{p^i+p^j}+X^{p^i+p^j+1} \right)\\
F(X+a+b)&=\sum_{0<i<j<n} c_{ij} \left( X^{p^i+p^j+1}+X^{p^j+1} (a+b)^{p^i}+X^{p^i+1} (a+b)^{p^j}\right.\\
&\qquad\quad \left. +X^{p^j} \left( a(a+b)^{p^i}+   b(a+b)^{p^i}\right) + X^{p^i} \left(a(a+b)^{p^j}+b   (a+b)^{p^j}\right)\right.\\
&\qquad \quad \left. +X (a+b)^{p^i+p^j}+ (a+b)^{p^i+p^j+1}+X^{p^i+p^j} (a+b)
  \right).
\end{align*}
The  second-order zero differential equation $F (X + a + b) -F(X + b) -F(X + a)+ F(X) = 0$ becomes,  after simplifying where possible (a straightforward, but rather tedious computation),
\allowdisplaybreaks
\begin{align*}
0&=\sum_{0<i<j<n} c_{ij} \left(-a^{p^i+1} X^{p^j}-a^{p^j+1} X^{p^i}-X
   a^{p^i+p^j}-a^{p^i+p^j+1}+X^{p^i+1}   \left(a^{p^j}+b^{p^j}\right)\right. \\
   &\qquad \left.+X^{p^j+1}   \left(a^{p^i}+b^{p^i}\right)+a X^{p^i}
   \left(a^{p^j}+b^{p^j}\right)+a X^{p^j}   \left(a^{p^i}+b^{p^i}\right)+b X^{p^i}
   \left(a^{p^j}+b^{p^j}\right)\right.\\
  &\qquad\left. +b X^{p^j}   \left(a^{p^i}+b^{p^i}\right)+X \left(a^{p^i}+b^{p^i}\right)
 \left(a^{p^j}+b^{p^j}\right)+a \left(a^{p^i}+b^{p^i}\right) \left(a^{p^j}+b^{p^j}\right)\right.\\
   &\qquad\left. +b \left(a^{p^i}+b^{p^i}\right)
   \left(a^{p^j}+b^{p^j}\right)+a^{p^i}
   \left(-X^{p^j+1}\right)-a^{p^j} X^{p^i+1}-b^{p^j+1}
   X^{p^i}-b^{p^i+1} X^{p^j}\right.\\
   &\qquad\left. -X b^{p^i+p^j}-b^{p^i+p^j+1}-b^{p^j}
   X^{p^i+1}-b^{p^i} X^{p^j+1}\right)\\
   &=\sum_{0<i<j<n} c_{ij} \left((a+b)\left( a^{p^i} b^{p^j}+a^{p^j} b^{p^i}\right)  
   +a b^{p^i+p^j}+b a^{p^i+p^j}+X
   \left(a^{p^i} b^{p^j}+a^{p^j} b^{p^i}\right)\right.\\
   &\qquad\left. +X^{p^i} \left(b a^{p^j}+a
   b^{p^j}\right)+X^{p^j} \left(b a^{p^i}+a b^{p^i}\right)\right).
\end{align*}
If the element $\sum_{0<i<j<n} c_{ij} \left((a+b)\left( a^{p^i} b^{p^j}+a^{p^j} b^{p^i}\right)  
   +a b^{p^i+p^j}+b a^{p^i+p^j}\right)$ belongs to the image of  the affine polynomial 
   \[
   \displaystyle 
L_{a,b}(X)=\sum_{0<i<j<n} c_{ij} \left( X
   \left(a^{p^i} b^{p^j}+a^{p^j} b^{p^i}\right)    +X^{p^i} \left(b a^{p^j}+a
   b^{p^j}\right)+X^{p^j} \left(b a^{p^i}+a b^{p^i}\right)\right),
   \]
    then the  solutions for the  second-order zero differential equation  form an affine space of dimension precisely $\dim (\ker L_{a,b})$. The proof is done.
\end{proof}

\begin{rmk}
Surely, one can wonder why we have not considered DO (Dembowski-Ostrom)  polynomials, that is, polynomials of the form $F(X)=\displaystyle \sum_{0\leq i<j<n} a_{ij} X^{p^i+p^j}$  (with or without affine terms) on~$\F_{p^n}$. However, the interested reader can quickly check that the second-order zero differential equation for $F$ at $a,b$, either has no solutions or it has~$p^n$ solutions.
\end{rmk}

Budaghyan et al.~\cite{Bkrs} also considered $F(X) = X^{2^n-2^s}$ (which coincides with the inverse function $X^{-1}$ extended by $0^{-1}=0$
for $s=1$) and showed that it is $0$-APN if and only if $\gcd(n,s+1)=1$.

\begin{thm}\label{Tapn}
 Let $F(X)=X^{2^n-2^s}$ be a function over $\F_{2^n}$, where $s$ is a positive integer satisfying $\gcd(n,s+1)=1$. Then for $a,b \in \F_{2^n}$,  
\begin{equation*}
\nabla_F(a,b)=
 \begin{cases}
 2^n &~\mbox{~if~} ab=0 \mbox{~and~} a=b,\\
2^{\gcd(s,n)}-4& ~\mbox{~if~} ab \neq0, ab^{2^s}+a^{2^s}b=0 \mbox{~and~} \gcd(s,n)>1, \\
2^{n-s-\rank(E_1)}-4& ~\mbox{~if~} ab \neq0, ab^{2^s}+a^{2^s}b \neq 0, A \neq 1 \\
& \qquad\mbox{~and~}\rank (E_1) \leq (n-s)-2,\\
0 &~\mbox{~otherwise},
\end{cases}
\end{equation*}
where $A=\dfrac{ab^{2^{n-s}}+a^{2^{n-s}}b}{a^{2^{n-s}-2}(ab^2+a^2b)}$, $E_1=C_T C_T^2\cdots C_T^{2^{n-1}}-I_{n-s}$, and $C_{T}^{2^i}$ is the matrix obtained from the $(n-s)\times (n-s)$ matrix
\[
C_T=
\begin{pmatrix}
  0 & 0 &\cdots & 0 & 1+A\\
 1 & 0 & \cdots & 0 & A\\
 0 & 1 &\cdots & 0 & 0\\
  \cdots  & \cdots & \cdots & \cdots & \cdots\\
  0 & 0 &\cdots & 1 &  0
\end{pmatrix}
\]
by applying to each of its entries the automorphism $X\mapsto X^{2^i}$ and $I_k$ is the
identity matrix of order $k$.
\end{thm}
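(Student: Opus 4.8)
The plan is to collapse the second-order zero differential equation into a single linearized (additive) polynomial equation over $\F_{2^n}$ and then count its roots. Since $X^{2^n}=X$, we have $F(X)=X^{1-2^s}=X/X^{2^s}$ for $X\neq 0$, so after the trivial cases $ab=0$ or $a=b$ (where the left-hand side vanishes identically and $\nabla_F(a,b)=2^n$), I assume $ab\neq 0$ and $a\neq b$. On the domain $X\notin\{0,a,b,a+b\}$ I multiply the equation $\sum_{\epsilon}(X+\epsilon)^{1-2^s}=0$ by $\prod_{\epsilon}(X+\epsilon)^{2^s}$, where $\epsilon$ ranges over $\{0,a,b,a+b\}$, to clear denominators and obtain
\[
G(X)=\sum_{\epsilon\in\{0,a,b,a+b\}}(X+\epsilon)\prod_{\epsilon'\neq\epsilon}(X+\epsilon')^{2^s}.
\]
The first key step is to simplify $G$: writing $\xi=X^{2^s}$ and using the Frobenius identities $(X+\epsilon)^{2^s}=\xi+\epsilon^{2^s}$ together with the fact that the four shifts sum to $0$, the only non-additive (degree-$3$-in-$\xi$) terms cancel in characteristic $2$, leaving the three-term linearized polynomial
\[
L(X)=c_2X^{2^{s+1}}+c_1X^{2^s}+e_3X,
\]
where $c_2=ab^{2^s}+a^{2^s}b$, $c_1=ab^{2^{s+1}}+a^{2^{s+1}}b$, and $e_3=(ab(a+b))^{2^s}\neq 0$.

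Since $G=L$ as formal polynomials, each of the four distinct boundary points $\{0,a,b,a+b\}$ is a root of $L$, and they span a $2$-dimensional $\F_2$-subspace inside $\ker L$; in particular $\dim_{\F_2}\ker L\geq 2$. The second key step is that these points are never genuine solutions: evaluating the original second-order equation at any of them reduces in characteristic $2$ to $F(0)+F(a)+F(b)+F(a+b)=0$, which is exactly the $0$-APN defining relation. Since the hypothesis $\gcd(n,s+1)=1$ is precisely the condition under which $F$ is $0$-APN, this relation forces $a=0$, $b=0$, or $a=b$, all excluded. Hence the genuine solution set is $\ker L\setminus\{0,a,b,a+b\}$, so $\nabla_F(a,b)=2^{\dim_{\F_2}\ker L}-4$, and it remains only to compute $\dim_{\F_2}\ker L$.

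I then split according to $c_2=ab^{2^s}+a^{2^s}b$. If $c_2=0$, equivalently $(a/b)^{2^s-1}=1$, then $a/b\in\F_{2^{\gcd(s,n)}}$, which forces $\gcd(s,n)>1$ since $a\neq b$; here $L(X)=c_1X^{2^s}+e_3X$ is a $2^s$-binomial whose kernel, since it contains $a$, is a full line $X_0\,\F_{2^{\gcd(s,n)}}$ of $\F_2$-dimension $\gcd(s,n)$, giving $\nabla_F=2^{\gcd(s,n)}-4$. If $c_2\neq 0$, I raise $L(X)=0$ to the power $2^{n-s}$ and reduce exponents using $X^{2^n}=X$, which yields the equivalent linearized polynomial of $2$-degree $n-s$,
\[
e_3^{2^{n-s}}X^{2^{n-s}}+c_2^{2^{n-s}}X^2+c_1^{2^{n-s}}X=0,
\]
and normalizing the leading coefficient (by the appropriate power of $a$) produces the scalar $A=\frac{ab^{2^{n-s}}+a^{2^{n-s}}b}{a^{2^{n-s}-2}(ab^2+a^2b)}$. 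Reading this as a linear recurrence on the Frobenius orbit $x_j=X^{2^j}$—it expresses $x_{n-s}$ through $x_0,x_1$, and the companion matrix $C_T$ advances a length-$(n-s)$ window by one Frobenius step—the periodicity $x_{j+n}=x_j$ becomes the condition $E_1v=0$ with $E_1=C_TC_T^2\cdots C_T^{2^{n-1}}-I_{n-s}$. Consequently $\dim_{\F_2}\ker L=(n-s)-\rank(E_1)$, giving $\nabla_F=2^{(n-s)-\rank(E_1)}-4$ when $A\neq 1$; the degenerate value $A=1$ is handled by hand and collapses $\ker L$ to just the four boundary points, so $\nabla_F=0$.

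The main obstacle is the case $c_2\neq 0$: I must justify that the root count of the three-term linearized polynomial equals \emph{exactly} $2^{(n-s)-\rank(E_1)}$. This requires building the companion-matrix/monodromy dictionary carefully—verifying that the Frobenius-twisted product $C_TC_T^2\cdots C_T^{2^{n-1}}$ correctly encodes closing up the recurrence around a full period, fixing the normalization that makes the last column of $C_T$ equal to $(1+A,A,0,\dots,0)^{\mathsf T}$, and confirming that field solutions $X\in\F_{2^n}$ correspond bijectively to null vectors of $E_1$—together with isolating the degenerate value $A=1$. By comparison, the characteristic-$2$ collapse of $G$ to $L$ and the $0$-APN exclusion of the boundary points are mechanical, if lengthy.
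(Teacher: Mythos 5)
Your proposal is correct and follows essentially the same route as the paper: after the trivial cases, clear denominators to the linearized trinomial $c_2X^{2^{s+1}}+c_1X^{2^s}+e_3X$, split on $c_2=ab^{2^s}+a^{2^s}b$ vanishing or not (the vanishing case giving the coset $a\,\F_{2^{\gcd(s,n)}}$ and $2^{\gcd(s,n)}-4$ solutions), raise to the $2^{n-s}$-th power to reach $Y^{2^{n-s}}+AY^2+(1+A)Y=0$, and count its kernel via the companion matrix $C_T$ --- with the only real differences being that the paper excludes the boundary points $\{0,a,b,a+b\}$ by the direct computation that they would force $B^{2^{n-s-1}-1}=1$, impossible when $\gcd(n,s+1)=1$ (equivalent in content to your citation of the $0$-APN result), and that the ``main obstacle'' you flag, the identity $\dim_{\F_2}\ker T=n-s-\rank(E_1)$, is not re-derived but simply quoted from McGuire--Sheekey~\cite{MS19} (or Polverino--Zullo~\cite{PZ}). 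One small inaccuracy that does not affect your conclusion: when $A=1$ the trinomial's kernel is $\{0,a\}$ (two points, since $Y^{2^{n-s}}+Y^2=0$ gives $Y\in\F_{2^{\gcd(s+1,n)}}=\F_2$), not the four boundary points; since the kernel must contain all four, this in fact shows $A=1$ cannot occur for $ab\neq 0$, $a\neq b$, and both your treatment and the paper's land in the ``otherwise, $\nabla_F(a,b)=0$'' case.
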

\begin{proof}
 For $a,b \in \F_{2^n}$, consider the following equation   
\begin{equation*}
F(X + a + b) +F (X + b) +F(X + a) + F (X) = 0.
\end{equation*}
If $ab = 0$ and $a=b$, then $\nabla_F(a, b) = 2^n$.
If $ab \neq 0$ and $a \neq b$, then we can write the above equation as
$$
(X + a + b)^{2^n-2^s} +(X + b)^{2^n-2^s} +(X + a)^{2^n-2^s} + X^{2^n-2^s} = 0,
$$
or equivalently,
\begin{equation}\label{e41}
 (Y + B + 1)^{2^n-2^s} +(Y + B)^{2^n-2^s} +(Y + 1)^{2^n-2^s}+ Y^{2^n-2^s} = 0,
\end{equation}
where $Y=\dfrac{X}{a}$ and $B=\dfrac{b}{a}$. Let $Y \in \{0,1,B,B+1\}$, the above equation has a solution only if $(B + 1)^{2^n-2^s} +B^{2^n-2^s} +1 = 0,$ or equivalently, $B^{2^{n-s-1}-1}=1$. However, this is possible only if $\gcd(s+1,n)>1$, and hence Equation~\eqref{e41} has no solution $Y \in \{0,1,B,B+1\}$. 

Let us assume $Y \not \in \{0,1,B,B+1\}$. Then Equation~\eqref{e41} reduced to,
\begin{equation}\label{e42}
 (B+B^{2^s})Y^{2^{s+1}}+(B+B^{2^{s+1}})Y^{2^s}+(B^{2^s}+B^{2^{s+1}})Y=0.
\end{equation}
Clearly, $Y=0, 1, B$ and $B+1$ satisfy the above equation. We have two possible cases here depending on whether $B+B^{2^s}=0$ or not.
One can see that there are $2^{\gcd(s,n)}$ possible values of B for which $B+B^{2^s}=0$. If $\gcd(s,n)=1$, then $B \in \{0,1\}$, which is not possible. Hence, we consider the case $B+B^{2^s}=0$ when $\gcd(s,n)>1$, and in this case the Equation~\eqref{e42} is reduced to $Y^{2^s}+Y=0$ which has $2^{\gcd(s,n)}$ solutions in $\F_{2^n}$.

We now assume $B+B^{2^s} \neq 0$. Observe that the sum of these coefficients is~1.
Raising Equation~\eqref{e42} to the power $2^{n-s}$, we get 
\begin{equation*}
 (B^{2^{n-s}}+B)Y^{2}+(B^{2^{n-s}}+B^{2})Y+(B+B^{2})Y^{2^{n-s}}=0,
\end{equation*}
or equivalently, dividing by $B^2+B$ and using the notation $\displaystyle A=\frac{B^{2^{n-s}}+B}{B^2+B}$,
\begin{equation}
\label{eq:lin_trin}
 Y^{2^{n-s}}+ A Y^2+(1+A)Y=0.
\end{equation}
This equation is connected to the known Blondeau-Canteaut-Charpin (BCC) conjecture~\cite{BCC} on $f_t(X)=X^{2^t-1}$ on $\F_{2^n}$, namely:
\begin{itemize}
\item[$(i)$]
$f_t$ is APN on $\F_{2^{2m}}$ if and only if $t=2$ (this was shown to be true by 
G\"olo\u glu~\cite{G12}).
\item[$(ii)$]
$f_t$ is APN on $\F_{2^{2m+1}}$ if and only if $t\in\{2,m+1,2m\}$ (still open).
\end{itemize}
Writing the differential equation for $f_t$ (for us, $t=n-s$, $\gcd(t-1,n)=\gcd(s+1,n)=1$) at some $a'$, one can derive the equation (where $Z:=1/X$),
 \begin{equation}
  \label{main:BCC}
  Z^{2^t}+A'Z^2+(1+A')Z=0, \text{ where } A'=\frac{1+a'^{2^t-1}}{1+a'}.
  \end{equation}
While we cannot prove the BCC conjecture, we can derive some conditions for the kernel of  Equation~\eqref{eq:lin_trin}.
If $A=1$, that is $B^{2^{n-s}}+B^2=0$, Equation~\eqref{eq:lin_trin} transforms into $Y^{2^{n-s}}+  Y^2=0$, that is,  $Y^{2^{n-s}-1}+Y=0$, which has $2^{\gcd(n-s-1,n)}=2^{\gcd(s+1,n)}=2$ solutions, (these are precisely $Y=0,1$) that we have already excluded.
We now let $A\neq 1$. We define  the companion matrix of the trinomial $T(Y)= Y^{2^{n-s}}+ A Y^2+(1+A)Y$ of~\eqref{eq:lin_trin} be given by the $(n-s)\times (n-s)$ matrix
\[
C_T=
\begin{pmatrix}
  0 & 0 &\cdots & 0 & 1+A\\
 1 & 0 & \cdots & 0 & A\\
 0 & 1 &\cdots & 0 & 0\\
  \cdots  & \cdots & \cdots & \cdots & \cdots\\
  0 & 0 &\cdots & 1 &  0
\end{pmatrix}.
\]
We now use McGuire and Sheekey result~\cite[Theorem 6]{MS19}(or ~\cite[Theorem 1.4]{PZ}), which, particularized  for our linearized trinomial $T$, implies that the dimension (over $\F_2$) of the kernel of $T$ (in terms of the rank of a matrix) is given by
\[
\dim_{\F_2} \ker T=n-s-\rank (E_1),
\]
where $E_1=C_T C_T^2\cdots C_T^{2^{n-1}}-I_{n-s}$, where
\[
C_T^{2^i} =\begin{pmatrix}
  0 & 0 &\cdots & 0 & (1+A)^{2^i}\\
 1 & 0 & \cdots & 0 & A^{2^i}\\
 0 & 1 &\cdots & 0 & 0\\
  \cdots  & \cdots & \cdots & \cdots & \cdots\\
  0 & 0 &\cdots & 1 &  0
\end{pmatrix}
\]
is the matrix obtained from $C_T$
by applying to each of its entries the automorphism $X\mapsto X^{2^i}$ and $I_k$ is the
identity matrix of order $k$. Therefore, the trinomial $T(Y)$ has $2^{n-s-\rank (E_1)}$ solutions in $\F_{2^n}$ or equivalently $2^{n-s-\rank (E_1)}- 4$ solutions in $\F_{2^n} \setminus \{0,1,B,B+1\}$ if $\rank (E_1) \leq (n-s)-2$. This will give us the desired claim.
\end{proof} 
To illustrate Theorem~\ref{Tapn}, we give the following example.
\begin{exmp}
 Let $F(X)=X^7$ be a function over the finite field $\F_{2^7}$, where $\F_{2^7}^{*}=<g>$. Here, $n=7$, $s=4$ and $\gcd(n,s+1)=\gcd(7,5)=1$. For $(a,b)=(g^2,g)$, we have
 \[
C_T=
\begin{pmatrix}
  0 & 0  & g^5 + g^3 + g\\
 1 & 0  & g^5 + g^3 + g + 1\\
 0 & 1  & 0\\
\end{pmatrix}
\]
and $E_1=C_T C_{T}^{2}C_{T}^{2^2}C_{T}^{2^3}C_{T}^{2^4}C_{T}^{2^5}C_{T}^{2^6}-I_3$ is a zero matrix with rank $0$. Thus, the above theorem would give us $\nabla_F(g^2,g)=2^{3}-4=4$. Notice that $\nabla_F(a,b) \leq 4$ corresponding to any pair $(a,b) \in \F_{2^n} \times \F_{2^n}$ with $ab \neq 0$ and $a \neq b$. Thus, we have that the Feistel boomerang uniformity of $F(X)=X^7$ over $\F_{2^7}$ is $4$.
\end{exmp}

\begin{rmk}
 Some particular cases of Theorem~\textup{\ref{Tapn}}  have already  been addressed, independently. For instance,  when $n-s=t=2$, $X^{2^n-2^s}$ is an APN function and for APN functions the FBCT spectra is already known in view of Lemma~\textup{\ref{L00}}. Moreover, Garg et al.~\textup{\cite{GHRS}} computed the FBCT spectra when $n=2m$ and $n-s=t=m,$ and in fact, they also showed that for $n=2m+1$ and $n-s=t=m+2$, $X^{2^t-1}$ has the F-Boomerang uniformity at most $4$. Recently, Man et al.~\textup{\cite{MMN}} computed the FBCT spectra  when $n-s=t=m+1$.
\end{rmk}

As a consequence of Theorem~\ref{Tapn}, the following corollary in a special case of $n-s=t=3$ is almost immediate.
\begin{cor}
\label{c2}
 Let $s$ and $n$ be positive integers such that $\gcd(s,n+1)=1, n-s=3$, and $F(X)=X^{2^n-2^s}$ be a function over the finite field $\F_{2^n}$. Then the Feistel boomerang uniformity of $F$ is $4$.
\end{cor}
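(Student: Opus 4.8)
The plan is to specialize Theorem~\ref{Tapn} to the case $t=n-s=3$ and then pin down the maximum. Substituting $n-s=3$, the only nontrivial entries that Theorem~\ref{Tapn} can produce are $2^{\gcd(s,n)}-4$, arising when $ab\neq 0$, $ab^{2^s}+a^{2^s}b=0$ and $\gcd(s,n)>1$, and $2^{\,3-\rank(E_1)}-4$, arising when $ab\neq 0$, $ab^{2^s}+a^{2^s}b\neq 0$, $A\neq 1$ and $\rank(E_1)\leq 1$. Since $\gcd(s,n)=\gcd(n-3,n)=\gcd(3,n)\leq 3$ and $\rank(E_1)\geq 0$, both quantities are at most $2^3-4=4$, while every remaining entry is $0$. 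Hence $\nabla_F(a,b)\leq 4$ for all $a,b\in\F_{2^n}$ with $ab\neq 0$ and $a\neq b$, so $\beta_F\leq 4$.

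For the reverse inequality I would show that the value $4$ is genuinely attained. The cleanest route is to observe that, in even characteristic, the solution set of the FBCT equation $F(X+a+b)+F(X+b)+F(X+a)+F(X)=0$ is invariant under translation by the two-dimensional $\F_2$-subspace $\{0,a,b,a+b\}$ (whenever $a\neq 0$, $b\neq 0$, $a\neq b$, so that these four elements are distinct). Consequently every nontrivial FBCT entry is a multiple of $4$, and in particular the smallest possible nonzero entry is $4$. Thus it suffices to exhibit one pair $(a,b)$ with $\nabla_F(a,b)\neq 0$, equivalently to show that $F$ is not APN: by Lemma~\ref{L00} a non-APN $F$ must have some nontrivial entry nonzero, which combined with $\beta_F\leq 4$ forces $\beta_F=4$.

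The step I expect to be the crux is verifying that $F$ is not APN under the stated hypotheses. Because $2^n-2^s=7\cdot 2^s$, the map $F(X)=X^{2^n-2^s}=\bigl(X^{2^s}\bigr)^7$ is EA-equivalent to $X^7$, so $F$ is APN if and only if $X^7$ is APN over $\F_{2^n}$; I would then invoke the known classification of the APN-ness of $X^7=f_3$ (using the proven even-degree case and the exclusion of the exceptional odd extension) to conclude that, under the coprimality hypothesis relating $s$ and $n$, the monomial $X^7$ is not APN. This is exactly where the arithmetic condition is needed: it rules out the single exceptional extension on which $X^7$ happens to be APN (and there $\beta_F$ would instead be $0$). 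Once non-APN-ness is secured, the two inequalities close the argument and give $\beta_F=4$. As a concrete alternative for the attainment step, when $3\mid n$ one has $\gcd(s,n)=3$ and taking $b=aB$ with $B\in\F_{2^3}\setminus\F_2$ lands in the regime $ab^{2^s}+a^{2^s}b=0$, yielding $\nabla_F(a,b)=2^3-4=4$ directly; when $3\nmid n$ one must instead produce an admissible $A\neq 1$ for which the linearized trinomial $Y^{8}+AY^2+(1+A)Y$ attains full kernel dimension $3$ (that is, $\rank(E_1)=0$), and it is precisely this complete splitting that is the hard computational point, the worked Example over $\F_{2^7}$ in the text being one such instance.
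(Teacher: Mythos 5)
Your proposal follows essentially the same route as the paper: the upper bound $\nabla_F(a,b)\le 4$ is obtained by specializing Theorem~\ref{Tapn} to $n-s=3$ (where $2^{\gcd(s,n)}-4\le 4$ and $2^{3-\rank(E_1)}-4\in\{0,4\}$ since $\rank(E_1)\in\{0,1\}$), and attainment comes from non-APN-ness combined with Lemma~\ref{L00}. The differences lie in how the attainment step is justified, and there your write-up is more detailed than the paper's: the paper simply asserts that $F$ is not APN, citing \cite[Theorem 5]{BCC}, and concludes; you instead note $2^n-2^s=7\cdot 2^s$, so that $F=(X^7)^{2^s}$ is linear-equivalent to $X^7$, reduce to the known differential behaviour of $f_3$, and supply two self-contained ingredients the paper leaves implicit: the invariance of the solution set of the FBCT equation under translation by the subspace $\{0,a,b,a+b\}$ (so every nontrivial entry is divisible by $4$), and a direct witness $\nabla_F(a,b)=2^{\gcd(s,n)}-4=4$ when $3\mid n$ by taking $b/a\in\F_{2^3}\setminus\F_2$. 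One caution concerns your claim that the coprimality hypothesis ``rules out the single exceptional extension'': that is true for the corollary's literal condition $\gcd(s,n+1)=1$ (it excludes $(n,s)=(5,2)$, since $\gcd(2,6)=2$), but that literal condition together with $n-s=3$ forces $n$ even, which is incompatible with the hypothesis $\gcd(n,s+1)=1$ of Theorem~\ref{Tapn} that both you and the paper must invoke; under the reading consistent with the theorem and with the paper's worked example over $\F_{2^7}$ (namely $\gcd(n,s+1)=1$, i.e., $n$ odd), the pair $(n,s)=(5,2)$ is admissible, and there $X^{28}$ is linear-equivalent to the Welch APN monomial $X^7$ over $\F_{2^5}$, so $\beta_F=0$ by Lemma~\ref{L00}. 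Thus the exceptional case you flagged is genuine, the paper's bare citation of \cite{BCC} glosses over it, and the statement needs $n\neq 5$ (or the literal gcd condition, at the cost of making Theorem~\ref{Tapn} inapplicable) for the attainment step; apart from this point, on which your attempt is in fact more careful than the paper, your argument is sound and matches the paper's.
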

\begin{proof}
Given the conditions on $s$ and $n$, one can easily deduce that $\gcd(s,n)$ is either 1 or 3. From Theorem~\ref{Tapn}, it is clear that for $ab \neq 0$, $a \neq b$ and $ab^{2^s}+a^{2^s}b=0$, $\nabla_F(a,b)=2^{\gcd(s,n)}-4=4$. We next consider the case when $ab \neq 0$, $a \neq b$ and $A \neq 1$, where $A=\dfrac{ab^8+a^8b}{a^6(ab^2+a^2b)}$, then we have $\nabla_F(a,b)=2^{3-\rank(E_1)}-4$ where $E_1$ is a $3 \times 3$ matrix. Since the $\rank(E_1)$ is less than or equal to $n-s-2$, it is either 0 or 1. In other words, either $\nabla_F(a,b) =4$ or $\nabla_F(a,b)=0$, respectively. Thus, the above arguments clearly indicate that  $\nabla_F(a,b) \leq 4$ for $ab \neq 0$ and $a \neq b$. Since $F$ is not an APN function (cf. \cite[Theorem 5]{BCC}), it follows from Lemma~\ref{L00} that the Feistel boomerang uniformity of $F$ is 4. 
\end{proof}

Note that Corollary~\ref{c2} gives only the Feistel boomerang uniformity of the function $X^{2^n-2^s}$ over $\F_{2^n}$ when $t=n-s=3$ and $\gcd(n,s+1)=1$. However, we adopt a slightly different approach to determine the complete Feistel boomerang spectra of this function in the following theorem. It may be noted that the differential spectrum of $F(X)=X^7$ was studied in~\cite[Theorem 5]{BCC}.

\begin{thm}
 Let $F(X)=X^{2^n-2^s}$ be a function over $\F_{2^n}$, where $s$ is a positive integer satisfying $\gcd(n,s+1)=1$ and $n-s=3$. Then for $a,b \in \F_{2^n}$, we have
 \begin{equation*}
 \nabla_F(a,b)=
\begin{cases}
2^n &~\mbox{~if~} ab=0 \mbox{~or~} a=b,\\
4 &~\mbox{~if~} ab \neq0, a\neq b, ab^{2^s}+a^{2^s}b=0\mbox{~and~}\gcd(s,n)=3 ,\\
&~\mbox{~or~} ab \neq0, a\neq b,  \Tr\left(\dfrac{a^7(ab^2+a^2b)}{a^2b^8+a^8b^2}\right)=1,\\
&~\Tr\left(\dfrac{a^3}{a^2b+ab^2}\right)=1; t_1, t_2 ~\mbox{~are cubes in~} \F_{2^{2n}}, \\
&~\mbox{~and~} \Tr(Z_1)=\Tr(Z_2)=0, \\
0 &~\mbox{~otherwise,}\\
\end{cases}
\end{equation*}
where $Z_1,Z_2$ are solutions of $a^6Z^2+a^3(ab^2+a^2b)Z+a^2b^4+a^4b^2+a^6=0$ and $t_1,t_2$ are solutions of $a^7(ab^2+a^2b)t^2+(a^8b^2+a^2b^8)t+a^7(ab^2+a^2b)=0$. Moreover, the Feistel boomerang uniformity of $F$ is $4$.
\end{thm}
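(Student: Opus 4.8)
The plan is to build directly on the normalization already carried out in the proof of Theorem~\ref{Tapn}. For $ab\neq 0$, $a\neq b$, writing $Y=X/a$, $B=b/a$, and setting $w=B^2+B$ and $A=\frac{B^8+B}{B^2+B}$, that analysis reduces the second-order zero differential equation, after checking that $Y\in\{0,1,B,B+1\}$ give no solution of the original equation (which uses $\gcd(s+1,n)=1$), to the linearized trinomial $Y^8+AY^2+(1+A)Y=0$, i.e.\ Equation~\eqref{eq:lin_trin} specialized to $n-s=3$. The sub-case $B+B^{2^s}=0$ (equivalently $ab^{2^s}+a^{2^s}b=0$) is handled exactly as in Theorem~\ref{Tapn}: here $\gcd(s,n)=\gcd(s,s+3)=\gcd(s,3)\in\{1,3\}$, the value $1$ forces $B\in\F_2$ and is excluded, while $\gcd(s,n)=3$ yields $\nabla_F(a,b)=2^3-4=4$. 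It remains to treat $B+B^{2^s}\neq 0$ with $A\neq 1$.

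The key new idea, and the reason we can bypass the companion-matrix computation of Theorem~\ref{Tapn}, is the substitution $Z=Y^2+Y$. A direct computation gives the collapse identity $Y^8+AY^2+(1+A)Y=Z^4+Z^2+(1+A)Z=Z\bigl(Z^3+Z+(1+A)\bigr)$, so the degree-$8$ linearized equation in $Y$ reduces to $Z=0$ together with the genuine cubic $Z^3+Z+(1+A)=0$, which is exactly the shape needed for Lemma~\ref{L4}. Since one checks $A=w^3+w+1$, the value $Z=w$ is always a root of the cubic, and factoring it out yields $Z^3+Z+(1+A)=(Z+w)\bigl(Z^2+wZ+(w+1)^2\bigr)$. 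The branch $Z=0$ returns only $Y\in\{0,1\}$, while $Z=w$ corresponds via $Y^2+Y=w$ to $Y\in\{B,B+1\}$; together these recover precisely the four spurious roots that are not solutions of the original equation. Hence every genuine solution arises from a root $Z_i$ of the residual quadratic $Z^2+wZ+(w+1)^2=0$ for which $Y^2+Y=Z_i$ is solvable.

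To finish, I would apply Lemma~\ref{L4} to $Z^3+Z+(1+A)$. Because $n$ is odd (from $\gcd(n,s+1)=\gcd(n,n-2)=\gcd(n,2)=1$) we have $\Tr(1)=1$, so the cubic splits into three linear factors over $\F_{2^n}$ exactly when $\Tr\!\left(\frac{1}{1+A}\right)=1$ and the roots $t_1,t_2$ of $t^2+(1+A)t+1=0$ are cubes; equivalently, by Lemma~\ref{L3}, the residual quadratic splits iff $\Tr\!\left(\frac{1}{w}\right)=1$. When it splits, the cubic's roots are $w,Z_1,Z_2$, and for each of $Z_1,Z_2$ Lemma~\ref{L3} produces two solutions of $Y^2+Y=Z_i$ precisely when $\Tr(Z_i)=0$; note $\Tr(Z_1)=\Tr(Z_2)$ automatically, since $Z_1+Z_2=w$ and $\Tr(w)=\Tr(B^2+B)=0$, which is consistent with the number of solutions being a power of $2$. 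Thus the number of genuine solutions is $4$ when all these conditions hold and $0$ otherwise, giving the stated spectrum after rewriting $A,w,Z_i,t_i$ in terms of $a,b$ (for instance $\frac{1}{1+A}=\frac{a^7(ab^2+a^2b)}{a^2b^8+a^8b^2}$, $\frac{1}{w}=\frac{a^3}{a^2b+ab^2}$, and the two quadratics scale to those in the statement). Finally, since $F$ is not APN when $n-s=3$ by \cite[Theorem~5]{BCC}, Lemma~\ref{L00} identifies the maximal value $\nabla_F=4$ with the Feistel boomerang uniformity.

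The main obstacle I anticipate is organizational rather than conceptual: verifying the collapse identity under $Z=Y^2+Y$ and then faithfully tracking which of the (at most eight) roots of the linearized equation are the spurious values $\{0,1,B,B+1\}$ and which are genuinely new, while keeping the three layers of conditions consistent --- cubic splitting (Lemma~\ref{L4}), quadratic splitting (Lemma~\ref{L3}), and the final Artin--Schreier solvability $\Tr(Z_i)=0$. A secondary subtlety is that the conditions $\Tr(1/(1+A))=1$ with $t_1,t_2$ cubes (from Lemma~\ref{L4}) and $\Tr(1/w)=1$ (from Lemma~\ref{L3}) are in fact equivalent, since both encode the splitting of the same residual quadratic factor; the statement lists them together, and one should record this equivalence so as not to impose spurious constraints. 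The translation of all conditions back to the variables $a,b$ is routine but error-prone, and the parity argument $\Tr(w)=0$ is what guarantees the count is always $0$ or $4$ rather than an inadmissible $6$.
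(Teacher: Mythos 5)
Your proposal is correct and follows essentially the same route as the paper's own proof: the same normalization $Y=X/a$, $B=b/a$, the same dispatch of the case $B+B^{2^s}=0$ via $\gcd(s,n)\in\{1,3\}$, and the same substitution $Z=Y^2+Y$ collapsing $Y^8+AY^2+(1+A)Y=0$ to the cubic $Z^3+Z+(1+A)=0$ with known root $w=B^2+B$, followed by Lemma~\ref{L4}, Lemma~\ref{L3}, and the Artin--Schreier condition $\Tr(Z_i)=0$, with non-APN-ness from \cite[Theorem 5]{BCC} pinning $\nabla_F=4$. Your explicit identity $A=w^3+w+1$ with the factorization $(Z+w)\bigl(Z^2+wZ+(w+1)^2\bigr)$, and your remark that the Lemma~\ref{L4} splitting conditions are equivalent to $\Tr(1/w)=1$, are slight sharpenings of what the paper states implicitly, but the argument is the same.
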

\begin{proof}
It is easy to observe that $n-s=3$ and $\gcd(s+1,n)=1$ would together imply that $n$ is odd. Consider the following equation 
\begin{equation*}
F(X + a + b) +F (X + b) +F(X + a) + F (X) = 0.
\end{equation*}
If $ab = 0$ and $a=b$, then $\nabla_F(a, b) = 2^n$.
If $ab \neq 0$ and $a \neq b$, then we can write the above equation as

\begin{equation}\label{eq51}
 (Y + B + 1)^{2^n-2^s} +(Y + B)^{2^n-2^s} +(Y + 1)^{2^n-2^s}+ Y^{2^n-2^s} = 0,
\end{equation}
where $Y=\dfrac{X}{a}$ and $B=\dfrac{b}{a}$. We know from the proof of the Theorem~\ref{Tapn} that Equation~\eqref{eq51} has no solution in $\{0,1,B,B+1\}$. Let us assume $Y \not \in \{0,1,B,B+1\}$, then one can write the Equation~\eqref{eq51} as

\begin{equation}\label{e52}
 (B+B^{2^s})Y^{2^{s+1}}+(B+B^{2^{s+1}})Y^{2^s}+(B^{2^s}+B^{2^{s+1}})Y=0.
\end{equation}
Clearly, $Y=0, 1, B$ and $B+1$ satisfy the above equation. We first consider the case when $B+B^{2^s}=0$. Observe that there are $2^{\gcd(s,n)}$ such $B \in \F_{2^n}$ satisfying $B+B^{2^s}=0$. Notice that $\gcd(s,n)$ is either 1 or 3. Thus, when $\gcd(s,n)=1$, the equation $B+B^{2^s}=0$ has only two solutions, namely $B=0$ and $B=1$, which we have already considered. Hence, we consider the case $B+B^{2^s}=0$ when $\gcd(s,n)=3$ and in this case, the Equation~\eqref{e52} has exactly $4$ solutions in $\F_{2^n}\setminus\{0,1,B,B+1\}$. Next, we consider $B+B^{2^s} \neq 0$, and in this case, we can reduce Equation~\eqref{e52} as given below
\begin{equation}\label{e53}
 Y^{2^{3}}+ A Y^2+(1+A)Y=0,
\end{equation}
where $A=\dfrac{B^{2^{3}}+B}{B^2+B}$. We can further rewrite the  Equation~\eqref{e52} as
$$
Y^{2^3}+Y^{2^2}+Y^{2^2}+Y^{2}+Y^{2}+AY^2+Y+AY=0,
$$
or equivalently,
$
(Y^{2}+Y)^{2^2}+(Y^{2}+Y)^{2}+(1+A)(Y^{2}+Y)=0.
$
Substitute $Z=Y^2+Y$ in the above equation to get 
\begin{equation}\label{e54}
 Z^{2^2}+Z^{2}+(1+A)Z=0.
\end{equation}
Clearly, $Z=0$ and $Z=B^2+B$ are solutions of Equation ~\eqref{e54}. Then the cubic equation $Z^3+Z+(1+A)=0$ would have at least one solution in $\F_{2^n}$, or more precisely, the cubic equation can have either one or three solutions in $\F_{2^n}$. From Lemma~\ref{L4}, if $\Tr\left(\frac{1}{1+A}\right)\neq 1$, then $Z^3+Z+(1+A)=0$ has only one solution, that is, $Z=B^2+B$. This would further imply that $Y \in \{0,1,B,B+1\}$, which is a contradiction to our assumption. Next, let the cubic equation has three solutions in $\F_{2^n}$. Now, Lemma~\ref{L4} ensures that this is possible only when $\Tr\left(\frac{1}{1+A}\right)=1$ and $t_1$, $t_2$ are cubes in $\F_{2^{2n}}$, where $t_1$, $t_2$ are roots of $t^2 + (1+A)t + 1 = 0$. Let $u_1$ and $u_2$ be the two solutions of $Z^3+Z+(1+A)=0$, except $B^2+B$, then they must satisfy the quadratic equation $Z^2+(B^2+B)Z+B^4+B^2+1=0$. One can see from Lemma~\ref{L3} that this quadratic equation would have no solution in $\F_{2^n}$ if $\Tr\left(\frac{1}{B^2+B}\right)\neq 1$ and exactly two solutions in $\F_{2^n}$ if $\Tr\left(\frac{1}{B^2+B}\right) = 1$. Let the two solutions in later case be $Z_1$ and $Z_2$, then it is easy to observe that $\Tr(Z_1)=\Tr(Z_2)$. Now, we look at the equations $Y^2+Y+Z_i=0$, for $i=1,2$, and  in view of Lemma~\ref{L3}, we get that the original Equation~\eqref{e53} has four solutions in $\F_{2^n}$ when $\Tr(Z_1)=\Tr(Z_2)=0$,  and otherwise has no solution. This completes the proof of the theorem.
\end{proof}

\section{Further comments}
\label{comments}

In this paper we concentrate on the second-order zero differential uniformity and spectra, connected to the (coefficients of the) Feistel Boomerang Connectivity Table (FBCT), the counterpart concept for Feistel Network-based ciphers (like $3$-DES, CLEFIA, etc.), as introduced by~\cite{Bouk}. Recall that it is known that a function is APN in even characteristic, and PN in odd characteristic if and only if the  second-order zero differential uniformity is zero. It is natural to inquire what happens for APN functions in odd characteristic.
We start with a rather intriguing result, where we show that if an odd or even function $F$ is second-order zero differentially $1$-uniform then it has to be  an APN function. We then proceed to consider several APN, or other low differential uniform functions and check their Feistel boomerang uniformity, or even compute, for most of them, their FBCT entries. This sheds further light on the  constraints between the differential uniformity and the  second-order zero differential uniformity and  the behavior of the latter concept. Surely, there is much to be done in this direction, and we point out that perhaps some good bounds relating the differential uniformity and the second-order zero differential uniformity would be an avenue of research, as well as investigate how various other classes of functions (besides the ones we considered) behave under this recent concept.


\begin{thebibliography}{99}
 
\bibitem{Biham91} E. Biham, A. Shamir, {\it Differential cryptanalysis of DES-like cryptosystems,} J. Cryptol. 4:1 (1991), 3--72.

\bibitem{BCC} C. Blondeau, A. Canteaut, P. Charpin, {\it Differential properties of $X \rightarrow X^{2^t-1}$,} IEEE Trans. Inf. Theory 57:12 (2011), 8127--8137.

\bibitem{Bouk} H. Boukerrou, P. Huynh, V. Lallemand, B. Mandal, M. Minier, {\it On the Feistel counterpart of the boomerang connectivity table,} IACR Trans. Symmetric Cryptol. 1 (2020), 331--362.

\bibitem{BoCa} C. Boura, A. Canteaut, {\it On the boomerang uniformity of cryptographic S-boxes,} IACR Trans. Symmetric Cryptol. 3 (2018), 290--310.

\bibitem{Bksrs} L. Budaghyan, N. Kaleyski, S. Kwon, C. Riera, P. St\u anic\u a, {\it Partially APN Boolean functions and classes of functions that are not APN infinitely often} Cryptogr.
Commun. 12:3 (2020), 527--545.

\bibitem{Bkrs} L. Budaghyan, N. Kaleyski, C. Riera, P. St\u anic\u a, {\it Partially APN functions with APN-like polynomial representations,} Des. Codes Crypt. 88 (2020), 1159--1177.

\bibitem{BP} L. Budaghyan, M. Pal {\it Arithmetization oriented APN functions}, Cryptology ePrint Archive, https://eprint.iacr.org/2023/1081 (2023).

\bibitem{cid} C. Cid, T. Huang, T. Peyrin, Y. Sasaki,  L. Song, {\it Boomerang connectivity table: a new cryptanalysis tool.} In: J. Nielsen, V. Rijmen (ed) Advances in Cryptology-EUROCRYPT'18. LNCS 10821, Springer, Cham, 683--714 (2018).

\bibitem{RH} R. S. Coulter, M. Henderson, {\it A note on the roots of trinomials over a finite field,} Bull. Austral. Math. Soc. 69 (2004), 429--432.

\bibitem{Dickson06}
L. E. Dickson,
{\em Criteria for the irreducibility of functions in a finite field}, 
Bulletin of the American Math. Soc. 13 (1906), 1--8.

\bibitem{DHKM} H. Dobbertin, T. Helleseth, P. V. Kumar, and H. Martinsen, {\it Ternary m-sequences with three-valued crosscorrelation function: New decimations of Welch and Niho type,} IEEE Trans. Inf. Theory 47:4 (2001), 1473--1481.

\bibitem {EM} S. Eddahmani, S. Mesnager, {\it Explicit values of the DDT, the BCT, the FBCT, and the FBDT of the inverse, the Gold, and the Bracken-Leander S-boxes}, Cryptogr. Commun. 14 (2022), 1301--1344.

\bibitem{EFRS} P. Ellingsen, P. Felke, C. Riera, P. St\u anic\u a, A. Tkachenko, {\it $c$-differentials, multiplicative uniformity, and (almost) perfect $c$-nonlinearity,} IEEE Trans. Inf. Theory 66(6) (2020), 5781--5789.

\bibitem{GHRS} K. Garg, S.U. Hasan, C. Riera, P. St\u anic\u a, {\it The second-order zero differential spectra of some functions over finite fields}, arXiv preprint arXiv:2309.04219 (2023).

\bibitem{G12}
F. G\"olo\u glu, {\em A note on the differential properties of $x\mapsto x^{2^t-1}$}, IEEE Tran. Inf. Theory 58:11 (2012), 6986--6988.

\bibitem{HRS99} T. Helleseth, C. Rong, D. Sandberg, {\it New families of almost perfect nonlinear power functions}. IEEE Trans. Inf. Theory \textbf{45} (1999), 475--485.

\bibitem{LYT} X. Li, Q. Yue, D. Tang, {\it The second-order zero differential spectra of almost perfect nonlinear functions and the inverse function in odd characteristic}, Cryptogr. Commun. 14.3 (2022), 653--662.

\bibitem{MMN} Y. Man, S. Mesnager, N. Li, X. Zeng, X. Tang, {\it In-depth analysis of S-boxes over binary finite fields concerning their differential and Feistel boomerang differential uniformities}, arXiv preprint arXiv:2309.01881 (2023).

\bibitem{MS19}
G. McGuire, J. Sheekey, {\em A Characterization of the Number
of Roots of Linearized and Projective Polynomials in the Field of Coefficients},
Finite Fields Appl. 57 (2019), 68--91.

\bibitem{Nyberg} K. Nyberg, {\it Differentially uniform mappings for cryptography,} In T. Helleseth (ed), Advances in Cryptology-EUROCRYPT’93, LNCS 765, Springer, Heidelberg, 55--64 (1994).

\bibitem{PZ} O. Polverino, Z. Ferdinando, {\it On the number of roots of some linearized polynomials,} Linear Algebra Appl. 601 (2020),  189--218.

\bibitem{Loops19}
A. Pott,
 {\em Partially almost perfect nonlinear permutations}, In LOOPS, Budapest, Hungary, 2019.

\bibitem{SSA} T. Shirai, K. Shibutani, T.  Akishita, S. Moriai, T.  Iwata, {\it The $128$-bit blockcipher CLEFIA (extended abstract,)} In: Biryukov A. (eds) Fast Software Encryption-FSE 2007. LNCS 4593, Springer, Berlin, Heidelberg, 181--195 (2007).

\bibitem{DW} D. Wagner, {\it The boomerang attack,} In: L. R. Knudsen (ed.) Fast Software Encryption-FSE 1999. LNCS 1636, Springer, Berlin, Heidelberg, 156--170 (1999). 

\bibitem{William} K.S. Williams, {\it Note on cubics over $\F_{2^n}$ and $\F_{3^n}$} J. Number Theory 7(4) (1975),   361--365.

\end{thebibliography}
\end{document}